\documentclass[preprint,hidelinks]{elsarticle}

\makeatletter
\def\ps@pprintTitle{%
 \let\@oddhead\@empty
 \let\@evenhead\@empty
 \def\@oddfoot{}%
 \let\@evenfoot\@oddfoot}
\makeatother
\pagestyle{plain}

\usepackage{latexsym}
\usepackage{amssymb}
\usepackage{amsfonts}
\usepackage{amstext}
\usepackage{amsmath}
\usepackage{multicol}
\usepackage{xcolor}
\usepackage{array}
\usepackage{hyperref}
\usepackage{mathtools}
\usepackage{geometry}
\usepackage{bm}
\usepackage{tikz-cd}
\usepackage{pgfplots}
\usepackage{graphicx}
\usepackage{epstopdf}
\usepackage[hypcap=false]{caption}
\usepackage{subcaption}
\usepackage{pstricks}
%\UseRawInputEncoding
\usepackage{booktabs}
\usepackage{natbib}
\usepackage{starfont}
\usepackage{amsthm}

\pdfminorversion=7

\usepackage{calc}
\newlength\lone

\newcommand{\R}{\mathcal{R}}
\newcommand{\Q}{\mathcal{Q}}

\newcommand{\bx}{\mathbf{x}}
\newcommand{\ov}{\overline}

\newcommand{\bu}{\mathbf{u}}
\newcommand{\blangle}{\big\langle}
\newcommand{\Blangle}{\Big\langle}
\newcommand{\brangle}{\big\rangle}
\newcommand{\Brangle}{\Big\rangle}

\newcommand{\field}[1]{\mathbb{#1}}

\newcommand{\ra}[1]{\renewcommand{\arraystretch}{#1}}

\definecolor{greenx}{rgb}{0.4660,0.6740,0.1880}
\definecolor{purpdot}{rgb}{0.4940,0.1840,0.5560}

\theoremstyle{definition}
\newtheorem{crit}{Criterion}
\newtheorem{crit2}{Criterion}

\newtheorem{proposition}{Proposition}

\begin{document}

\begin{frontmatter}

\title{Heat transport in a hierarchy of reduced-order convection models}

\author{Matthew L. Olson\corref{cor1}}
\ead{mlolson@umich.edu}
%\author[add2]{David Goluskin}
%\ead{goluskin@uvic.ca}
%\author[add3,add4]{William W. Schultz}
%\ead{schultz@umich.edu}
\author{Charles R. Doering}
\ead{doering@umich.edu}

\cortext[cor1]{Corresponding Author.}
%\address[add1]{Department of Mathematics, University of Michigan, Ann Arbor, MI 48109-1043, USA}
%\address[add2]{Department of Mathematics \& Statistics, University of Victoria, Victoria, BC V8W 2Y2, Canada}
%\address[add3]{Department of Mechanical Engineering, University of Michigan, Ann Arbor, MI 48109-2125, USA}
%\address[add4]{Department of Naval Architecture \& Marine Engineering, University of Michigan, Ann Arbor, MI 48109-2145, USA}
%\address[add5]{Center for the Study of Complex Systems, University of Michigan, Ann Arbor, MI 48109-1042, USA}
%\address[add6]{Department of Physics, University of Michigan, Ann Arbor, MI 48109-1040, USA}

\begin{abstract}
Reduced-order models (ROMs) are systems of ordinary differential equations (ODEs) designed to approximate the dynamics of partial differential equations (PDEs). In this work, a distinguished hierarchy of ROMs is constructed for Rayleigh's 1916 model of natural thermal convection. These models are distinguished in the sense that they preserve energy and vorticity balances derived from the governing equations, and each is capable of modeling zonal flow. Various models from the hierarchy are analyzed to determine the maximal heat transport in a given model, measured by the dimensionless Nusselt number, for a given Rayleigh number. Lower bounds on the maximal heat transport are ascertained by computing the Nusselt number among equilibria of the chosen model using numerical continuation. A method known as sum-of-squares optimization is applied to construct upper bounds on the time-averaged Nusselt number. In this case, the sum-of-squares approach involves constructing a polynomial quantity whose global nonnegativity implies the upper bound along all solutions to a chosen ROM. The minimum such bound is determined through a type of convex optimization called semidefinite programming. For the ROMs studied in this work, the Nusselt number is maximized by equilibria whenever the Rayleigh number is sufficiently small. In this range of Rayleigh number, the equilibria maximizing heat transport are those that bifurcate first from the zero state. Analyzing this primary equilibrium branch provides a possible mechanism for the increase in heat transport near the onset of convection.
\end{abstract}

\begin{keyword}
Rayleigh--B\'enard convection \sep
Heat transport \sep
Dynamical systems \sep
Polynomial optimization \sep
Sum-of-squares optimization
\end{keyword}

\end{frontmatter}

\section{Introduction} \label{sec:intro}

Thermal convection underlies a vast array of real-world phenomena, including atmospheric dynamics \cite{Stevens2005}, mantle convection \cite{Ogawa2008}, and stellar physics \cite{Miesch2005, Spiegel71a}. The fundamental mathematical model for this process is Rayleigh--B\'enard convection \cite{Rayleigh1916}, consisting of a horizontal layer of fluid heated from below. The temperature difference between the upper and lower boundaries drives heat transport across the layer. Convection occurs when the imposed temperature gradient is sufficiently large so that buoyancy overcomes viscous damping. 

While Rayleigh--B\'enard convection is a fundamental physical process, its complexity poses great challenges to obtaining exact solutions or performing comprehensive analysis. These challenges have drawn the attention of countless researchers in a variety of fields. The result is a rich literature of theoretical studies \cite{Malkus1954a, Spiegel1960, Busse69, Howard63, Whitehead2011}, numerical simulations \cite{Johnston2009, Scheel2014, Iyer2020}, and real-world experiments \cite{Niemela2000, Niemela2003, Urban2011}. A prominent focus of such studies is determining the dependence of the heat transport rate on the magnitude of the imposed temperature gradient. Casting the dimensional quantities in terms of dimensionless variables facilitates the search for this relationship by allowing the dynamics to be expressed in terms of a small number of characteristic parameters. The standard dimensionless measure of the temperature gradient is the Rayleigh number, Ra. One common measure of the heat transport rate is the dimensionless Nusselt number, Nu, given by the ratio of total heat transport across the fluid layer to that of pure conduction. Other dimensionless parameters include the Prandtl number, Pr, that quantifies the material properties of the fluid, and the domain aspect ratio. 

Determining the relationship between Nu and Ra presents a significant challenge, especially at large Rayleigh number. It is often presumed that in the large-Ra limit, the Nusselt number along any statistically steady flow is asymptotic to a power function of Ra, that is 
\begin{equation} \label{eq:NuRascale}
\mbox{Nu} \sim \mbox{Pr}^p \mbox{Ra}^q, \quad \mbox{Ra} \to \infty,
\end{equation}
for some scaling exponents $p$ and $q$. A major theoretical objective in the study of thermal convection is to determine the value of these exponents (if such a relationship exists). Numerous investigators have addressed this problem, and a brief history of such works is presented below. Malkus \cite{Malkus1954b} and Priestley \cite{Priestley1954} each proposed scaling arguments that suggest $p = 0$, $q = 1/3$ in \eqref{eq:NuRascale}. 
%This follows from the scaling argument that for fully developed, high-Ra flows at large aspect ratio, the time-averaged vertical heat flux is independent of the layer height. 
Dimensional arguments offered by Spiegel \cite{Spiegel1963, Spiegel71a} imply that instead $p = q = 1/2$. 
%Spiegel postulated that at large-Ra, hot and cold plumes originating from the boundary layer would approach the fluid’s free fall velocity. In this so-called ``ultimate” regime, transport across the bulk, rather than across thermal boundary layers, is the rate-limiting process.

Numerous studies have been conducted to determine the correct scaling law at large Ra, including laboratory experiments \cite{Niemela2003,Urban2011} and direct numerical simulations \cite{Johnston2009,Stevens2011,Iyer2020}. Recent results suggest that the $q = 1/3$ scaling law persists over at least five orders of magnitude (from \linebreak $\mbox{Ra}\approx$ $10^{10}$ to Ra $\approx 10^{15}$) \cite{Iyer2020, Doering2020}. Whether a transition to the $q = 1/2$ regime occurs is yet to be rigorously determined. Both simulations and experiments introduce uncertainties that cannot be completely controlled. An alternative theoretical approach is to derive upper or lower bounds on $\mbox{Nu}(\mbox{Ra}, \mbox{Pr}, A)$ directly from the equations of motion. Variational methods are used in \cite{Wen2015, Whitehead2011}, where the maximal heat transport is rigorously bounded for two dimensional Rayleigh--B\'enard convection with stress-free isothermal boundaries. The tightest variational bounds establish that $q \leq 5/12$ in \eqref{eq:NuRascale} \cite{Whitehead2011}. This does not rule out the $q = 1/2$ scaling in the general case, but it implies that if this ``ultimate regime" of convection exists, it must either occur for domains with no-slip boundaries or include fluid motions that are fully three-dimensional.

%This problem has been addressed through numerical and experimental  studies, but the results differ at high Rayleigh number \cite{Doering2020}. Other works have established upper bounds on the Nusselt number in Rayleigh--B\'enard convection, including variational methods \cite{Howard63, Busse69} and the background method introduced by Doering \& Constantin \cite{Doering96,Whitehead2011}.

The Galerkin-truncated models analyzed in this work approximate those of Rayleigh's PDE through a truncated Fourier series expansion. Prior truncated models of this form \cite{Lorenz63, Howard86, Thiffeault1995, Gluhovsky2002} have led to advancements in the understanding of chaos theory and mean-flow instabilities. Approximating the PDE by an ODE system allows upper bounds on the truncated version of the Nusselt number to be established through the application of recently developed techniques for bounding time-averaged quantities in polynomial dynamical systems \cite{Cherny2014}. In this approach, polynomial inequalities are constructed whose non-negativity implies an upper bound on the Nusselt number for all solutions to the ODE model. Introducing a carefully chosen polynomial \emph{auxiliary function} that depends on the state variables of the ODE allows meaningful global bounds to be obtained without solving the ODEs. Writing these inequalities as sum-of-squares constraints enables the feasible minimization over the class of polynomial auxiliary functions of fixed degree $d$. This approach has been employed in previous work \cite{Goluskin2018,Goluskin2019,Fantuzzi2016}, including a study of an eight-mode Galerkin-truncated model of Rayleigh--B\'enard convection that forms the foundation for this paper \cite{Olson2020}.

In this paper, a hierarchy of reduced-order models is developed to approximate the physics of Rayleigh's PDE \cite{Rayleigh1916} in a horizontally periodic domain with stress-free isothermal boundary conditions. Each model in this hierarchy satisfies a set of energy and vorticity balance laws derived from the PDE and includes horizontal shear modes capable of producing zonal flow. Such models build on the results of previous studies \cite{Lorenz63, Thiffeault1995, Olson2020} by forming successively closer approximations to the governing equations. Analyzing the heat transport properties of models in the chosen hierarchy supports the hypothesis that equilibria provide maximal heat transport for Rayleigh--B\'enard convection. The development of the equilibria near the onset of convection with increasing model size is explored to better understand the mechanism of heat transport by the equilibria.

The rest of the paper is structured as follows. The governing equations for Rayleigh--B\'enard convection are detailed in \S\ref{sec:RBC}. The general construction of Galerkin-truncated models approximating these governing equations is given in \S\ref{sec:ROM}, and the particular hierarchy of truncated models we consider is presented in \S\ref{sec:HK}. In \S\ref{sec:HKpart} we analyze the equilibria of these truncated models, providing candidate solutions for---and lower bounds on---the optimal heat transport. The general framework of our bounding approach is described in \S\ref{sec:SOS} and this technique is used to establish numerical upper bounds on heat transport in \S\ref{sec:UB}. Numerical and analytical details are provided in the appendices.

\section{Rayleigh--B\'enard convection} \label{sec:RBC}

The truncated models studied in this work are approximations to the PDEs governing Rayleigh's 1916 model of two-dimensional thermal convection \cite{Rayleigh1916}. To construct Rayleigh's model, consider a fluid in a horizontally periodic domain $(x,z) \in [\pi A d] \times [\pi d]$ with velocity $\mathbf{u}(\bx,t)$, pressure $p(\bx,t)$, density $\rho$, and temperature $T(\bx,t)$. Impermeable walls along the upper and lower boundaries of the domain are held at the fixed temperatures $T_t$ and $T_b$, respectively, with the resulting temperature drop $\Delta := T_b-T_t$. Density variations are assumed to be sufficiently small such that the dynamics of the fluid can be accurately modeled under the Boussinesq approximation. The pertinent assumptions are that the kinematic viscosity $\nu$, gravitational constant $g$, and thermal diffusivity $\kappa$ are constant, and that the density is fixed at $\rho_0$ except in the term representing the buoyancy force. In the buoyancy term, the density is assumed to follow the linear profile $\rho = \rho_0 (1 - \alpha (T - T_b))$, where $\alpha$ is the coefficient of thermal expansion. The relevant material parameters form two dimensionless groups, typically represented by the Prandtl number, $\sigma$, and the Rayleigh number, Ra:
\begin{equation}
\sigma := \frac{\nu}{\kappa}, \qquad \text{Ra} := \frac{g \alpha (\pi d)^3 \Delta}{\nu \kappa}.
\end{equation}
The Navier--Stokes equations, nondimensionalized using length scale $d$, time scale $d^2/\kappa$, and temperature scale $\Delta$, are expressed as \cite{Chand1961}
\begin{align} \label{eq:NSE}
\partial_t \bu + \bu \cdot \nabla \bu &= - \nabla p + \sigma \nabla^2 \bu + \sigma \mathcal{R} \pi T \hat{\mathbf{z}}, \\
\nabla \cdot \bu = 0, \\
\partial_t T + \bu \cdot \nabla T &= \nabla^2 T,
\end{align}
where $\mathcal{R} := \text{Ra}/\pi^4$ is the modified Rayleigh number. 

We consider 2D Rayleigh--B\'enard convection because this allows the construction of a stream function $\psi$ that is related to the velocity variables by $(u,w) = (\partial_z \psi, -\partial_x \psi)$. The temperature function is shifted so that the fixed boundary temperatures are $T = 1$ along the bottom and $T = 0$ along the top. After this transformation, the linear temperature profile in a quiescent fluid becomes $T_c := 1 - z/\pi$. Then let the dimensionless temperature deviation function $\theta$ be defined as $\theta := \pi (T_c - T)$. The dimensionless Boussinesq equations, expressed in terms of $\psi$ and $\theta$, are given by
\begin{align} 
\partial_t \nabla^2 \psi - \{\psi, \nabla^2 \psi\} &= \sigma \nabla^4 \psi + \sigma \R \partial_x \theta, \label{eq:BE_psi}\\
\partial_t \theta - \{\psi, \theta\} &= \nabla^2 \theta + \partial_x \psi, \label{eq:BE_theta}
\end{align}
where $\{f,g\}:= \partial_x f \partial_z g - \partial_z f \partial_x g$ denotes the Jacobian of the functions $f$ and $g$. In other ROM convection studies, $\theta$ is sometimes rescaled by a factor of $\R$, thereby moving the Rayleigh number to the $\partial_x \psi$ term in \eqref{eq:BE_theta}. However, this rescaling does not appear to provide any benefit from a numerical conditioning perspective.

At the upper and lower boundaries, we impose stress-free boundary conditions. Such boundary conditions have been employed in other studies of truncated models of Rayleigh--B\'enard convection \cite{Howard86,Thiffeault1995, Gluhovsky2002}, in part due to the convenient Fourier expansion of $\psi$. Stress-free boundary conditions require that
\begin{equation} \label{eq:stressFree}
\psi = \partial_{zz} \psi = 0, \quad z = 0, \pi.
\end{equation}
Fixed-temperature boundary conditions imply that $\theta$ obeys Dirichlet boundary conditions:
\begin{equation}
\theta = 0, \quad z = 0, \pi.
\end{equation}
In addition, all variables are periodic in $x$ with period $A \pi$. 

The primary emergent quantity of interest is the Nusselt number, Nu, defined as the ratio of total heat transport to conductive heat transport, averaged over the domain and over infinite time. The Nusselt number can be written in terms of $\psi$ and $\theta$ by averaging this ratio over the fluid domain, resulting in the expression
\begin{equation} \label{eq:Nusselt}
\text{Nu} = 1 + \blangle \ov{\theta \partial_x \psi} \brangle,
\end{equation}
where the spatial and temporal averages are given by
\begin{equation} \label{eq:spaceAvg}
\langle f \rangle := \frac{1}{A \pi^2} \int_0^\pi \int_0^{A\pi} f(x,z) \, {\rm d}x \, {\rm d}z, 
\end{equation}
and
\begin{equation} \label{eq:timeAvg}
\ov{f} := \lim_{\tau \to \infty} \frac{1}{\tau} \int_0^\tau f(t)\, {\rm d}t,
\end{equation}
assuming the limit exists. An equivalent expression for the Nusselt number is obtained by averaging over a horizontal slice, yielding
\begin{equation} \label{eq:Nusselt2}
\text{Nu} = 1 + \left[ \ov{\partial_z \langle \theta \rangle_x}(z) + \ov{\langle \theta \partial_x \psi \rangle_x}(z) \right],
\end{equation}
where the horizontal average is given by
\begin{equation} \label{eq:horizAvg}
\langle f \rangle_x := \frac{1}{A \pi} \int_0^{A\pi} f(x) \, {\rm d}x.
\end{equation}
The quantities \eqref{eq:Nusselt} and \eqref{eq:Nusselt2} are equivalent along statistically steady flows of the Boussinesq equations. This property will be used in the next section as a criterion for determined the quality of truncated models.

\section{Truncated model construction} \label{sec:ROM} 

A substantial body of research has been devoted to studying the dependence of the heat transport on the Rayleigh number in Rayleigh--B\'enard convection. Even so, there remains a gap between the rigorous upper bounds on the Nusselt number derived from the equations of motion, and the maximal heat transport obtained from laboratory experiments and numerical simulations. As a complementary approach, one may construct \emph{reduced-order models}, or ROMs. These are finite systems of ordinary differential equations (ODEs) derived from the governing equations. Ideally, ROMs should approximate the dynamics of fluid convection. 

Various ODE models have been derived for Rayleigh--B\'enard convection, beginning with the atmospheric model of Saltzman~\cite{Saltzman62} that inspired the seminal study by Lorenz of a three-dimensional ROM now known as the Lorenz equations~\cite{Lorenz63}. Although the Lorenz equations are a simplified model of Rayleigh's PDE, the ODE system correctly predicts the minimal Rayleigh number where convection can occur and accurately models the physics of 2D Rayleigh--B\'enard convection near the onset of convection. Following the work of Lorenz, many other ROMs have been studied as simplified convection models~\cite{Howard86,Thiffeault1995,Thiffeault1996,Hermiz1995,Gluhovsky2002}. 

This section provides general details on the construction of reduced-order models for 2D Rayleigh--B\'enard convection with stress-free isothermal boundaries in a horizontally periodic domain. Stress-free boundaries are chosen for models constructed in this work in part because this allows expansion in terms of the Fourier basis. Readers interested in the hierarchy of models introduced in this paper but not in the general model construction details may wish to skip to section \S\ref{sec:HK}.

\subsection{Galerkin expansions and reduced-order models} \label{sec:Galerkin}

The derivation of the Lorenz equations is an example of a general technique known as \emph{Galerkin expansion}, where dependent variables are expanded in terms of an orthogonal set of basis functions that each satisfy the boundary conditions, producing an ROM. Square-integrable functions $\psi$ and $\theta$ satisfying the boundary conditions are given by the series
\begin{equation}
\begin{aligned} \label{eq:Galerkin}
\psi(x,z,t) &= \sum_{m=0}^\infty \sum_{n=1}^\infty \left[ a_{mn}(t) \cos mkx +b_{mn}(t) \sin mkx \right] \sin nz, \\
\theta(x,z,t) &= \sum_{m=0}^\infty \sum_{n=1}^\infty \left[ c_{mn}(t) \cos mkx  + d_{mn}(t) \sin mkx \right] \sin nz.
\end{aligned}
\end{equation}
Here $k = 2/A$ is the fundamental horizontal wavenumber for a domain of aspect ratio $A$. The subscripts on the coefficients in the above expansions correspond to the indices on the horizontal and vertical mode numbers of the associated Fourier modes. Inserting the expansions~\eqref{eq:Galerkin} into the Boussinesq equations~\eqref{eq:BE_psi}--\eqref{eq:BE_theta} and projecting the resulting expression onto each basis element results in a system of ODEs describing the time evolution of the Fourier mode amplitudes. These amplitude functions will be referred to as ``modes" whenever the context is clear. 

Practical applications require that the series expansions be truncated to some finite number of terms in \eqref{eq:Galerkin}, yielding a finite system of ODEs that approximates the full PDE. Nonlinear interactions between the Fourier modes yield terms outside the span of the modes in the truncated model; these excess terms are discarded when projecting onto only the included modes.

\subsection{Model simplification} \label{sec:simplifyModel}

The series expansions \eqref{eq:Galerkin} for $\psi$ and $\theta$ include two terms for each horizontal wavenumber, differing from each other only in their horizontal phase. For models in this work we consider solutions of fixed horizontal phase by setting $a_{11} = 0$ in \eqref{eq:Galerkin} as in the derivation of the Lorenz equations. This allows the inclusion of a larger spectrum of horizontal wavenumbers in the resulting system given a particular model size. Making the above choice determines the horizontal phase of all terms in~\eqref{eq:Galerkin} to maintain consistency upon substitution into the governing equations. For instance, the $b_{11}$ ODE contains $c_{11}$, but not $d_{11}$, so for consistency this implies $d_{11} = 0$. Continuing this process by selecting the consistent phase in each term of \eqref{eq:Galerkin} results in:
\begin{equation}
\begin{aligned}
a_{mn} &\equiv d_{mn} \equiv 0, \mbox{ for } m + n \mbox{ even},\\
b_{mn} &\equiv c_{mn} \equiv 0, \mbox{ for } m + n \mbox { odd}.
\end{aligned}
\end{equation}
%For example, it immediately implies that $d_{11} \equiv 0$ since this would appear on the right-hand side of the $a_{11}$ equation. As a result of this convention, exactly one coefficient remains in each term of the Fourier expansions for $\psi$ and $\theta$ in \eqref{eq:Galerkin}.
We henceforth discard all terms with inconsistent with the chosen horizontal phase condition, and let $\psi_{mn}$ and $\theta_{mn}$ be the nonzero coefficients remaining in the Fourier expansions. For example, $\psi_{11} = b_{11}$, $\psi_{12} = a_{12}$, and so on.
% these new coefficients are related to those in \eqref{eq:Galerkin} as shown in Table \ref{tab:HorzTrig}.
%In the notation of~\eqref{tab:HorzTrig}, the expansions used to construct the Lorenz equations include the modes $\psi_{11}$, $\theta_{11}$, and $\theta_{02}$.
%If the periodic boundaries are instead replaced by stress-free conducting side walls, the boundary conditions imply that $a_{mn}, d_{mn} \equiv 0$ for all $m,n$, and the fundamental horizontal wavenumber must be halved to account for stream and temperature functions whose horizontal periods are half-integer multiples of $A$.
%\begin{table}[h]
%\centering
%\ra{1.1}
%\caption{Coefficients in the expansions for $\psi$ and $\theta$ satisfying our phase convention.} \label{tab:HorzTrig}
%$\begin{array}{c c c}
%\toprule
%\text{New coefficient} & \multicolumn{2}{c}{\text{Old coefficient}}\\
%\midrule
% & m + n  \text{ even} & m + n  \text{ odd} \\
% \midrule
%\psi_{mn} & b_{mn} & a_{mn} \\
%\theta_{mn} & c_{mn} & d_{mn}\\
%\bottomrule
%\end{array}$
%\end{table}
%By incorporating the boundary conditions, the truncated model may be written in terms of only positive coefficients. The result is a form which is more concise and amenable to an algorithmic approach, at the expense of being somewhat more complicated.
The truncated models studied here take the form
\begin{equation}
\begin{aligned} \label{eq:Galerkin2}
\psi(x,z,t) = \sum_{\mathclap{(m,n) \in S_\psi}} \psi_{mn}(t) \, f_\psi(mkx) \sin nz, \\
\theta(x,z,t) = \sum_{\mathclap{(m,n) \in S_\theta}} \theta_{mn}(t) \, f_\theta(mkx) \sin nz,
\end{aligned}
\end{equation}
where $f_\psi(x) = \sin x$ for each term such that the total wavenumber $m + n$ is even and $\cos x$ whenever $m + n$ is odd, due to the phase convention described above. Similarly, $f_\theta(x) = \cos(x)$ if $m + n$ is even and $\sin(x)$ if $m + n$ is odd. 

Fixing the horizontal phase restricts the space of solutions to the governing equations \eqref{eq:BE_psi}--\eqref{eq:BE_theta} to those exhibiting a particular symmetry about the points $(\pi/2k, \pi/2)$ and $(3\pi/2k, \pi/2)$. These points are the midpoints horizontally and vertically in their respective half-domain and lie at the center of the stable rolls that arise at the onset of convection. In each half-domain, the truncated $\theta$ expansion is symmetric along any line passing through the center point, and $\psi$ is antisymmetric about the same point. A corollary of this result is that all modes of the form $\psi_{0n}$ (also called shear modes) must have odd vertical wavenumber, while modes of the form $\theta_{0n}$ must have even vertical wavenumber.

ROMs constructed from \eqref{eq:Galerkin2} are capable of capturing the dynamics of \emph{zonal flow} if at least one mode of the form $\psi_{0n}$ is included in the truncated. Zonal flow occurs when mean horizontal flows near the top and bottom boundaries vertically shear the fluid~\cite{Goluskin2013}. 
%This is because the stream function modes $\psi_{0n}$ describe purely horizontal velocity fields.
This phenomenon has been observed in experiments of turbulent convection~\cite{Krishnamurti1981} and occurs in toroidal plasmas~\cite{Diamond2005} and planetary atmospheres~\cite{Busse1994}. Howard and Krishnamurti~\cite{Howard86} designed a six-ODE truncated model that is constructed by augmenting the Lorenz equations with the $\psi_{01}$, $\psi_{12}$ and $\theta_{12}$ modes, resulting in a truncated model that exhibits zonal flow. Yet Howard and Krishnamurti themselves observed nonphysical behavior in their model, including unbounded trajectories, that make it unsuitable for drawing any analogy with the heat transport of the PDE.
Thiffeault and Horton~\cite{Thiffeault1995, Thiffeault1996} suggested the inclusion of the $\theta_{04}$ mode that ensures conservation of mechanical energy in the dissipationless limit $\nu, \kappa \to 0$. Models exhibiting this property have bounded trajectories, and the truncated versions of the two expressions for the Nusselt number, \eqref{eq:Nusselt} and \eqref{eq:Nusselt2}, are equivalent along all trajectories \cite{Thiffeault1995}. The result is physically reasonable heat transport near the onset of convection. A similar adjustment to Howard and Krishnamurti's model was proposed by Hermiz \emph{et al.}~\cite{Hermiz1995} who included the $\psi_{03}$ mode so that solutions conserve the truncated version of total vorticity in the dissipationless limit. These ideas culminated in an eight-dimensional model introduced by Gluhovsky \emph{et al.} \cite{Gluhovsky2002} by adding both the $\theta_{04}$ and the $\psi_{03}$ mode to the Fourier expansions used to construct Howard and Krishnamurti's original model. This system is called the HK8 model because it is the minimal extension of Howard and Krishnamurti's model that restores these basic integral identities of the PDE \cite{Olson2020}.

\subsection{Model construction} \label{sec:constructModel}

We now describe the general form of reduced-order models of Rayleigh--B\'enard convection satisfying the above properties. Equations for such truncated models were previously given in \cite{Treve1982,Thiffeault1995}. Here we present a different form of these equations designed for programmatic construction. The code used to construct models in this work from the below equations can be found on GitHub\footnote{GitHub repository: \url{https://github.com/PeriodicROM/construct_roms}}.

Suppose a reduced-order model is defined by selecting a finite number of terms from the expansions of $\psi$ and $\theta$. Let $S_\psi$ and $S_\theta$ be the sets of all selected modal pairs $(m,n)$ that from the $\psi$ and $\theta$ series, respectively. 
%For simplicity, we write $\psi_\alpha = \psi_{mn}$ (and similar for other variables). 
Galerkin expansion yields the following ODEs for any given pair $(m,n)$ \cite{Thiffeault1995}:
\begin{align} 
    \dot{\psi}_{mn} &= -\sigma \rho_{mn} \psi_{mn} + (-1)^{m+n} (\sigma \R) \frac{mk}{\rho_{mn}} \theta_{mn} + Q_{mn}^{\psi}, \label{eq:psiTrunc} \\
     \dot{\theta}_{mn} &= -\rho_{mn} \theta_{mn} + (-1)^{m+n} (mk) \psi_{mn} + Q_{mn}^{\theta}, \label{eq:thetaTrunc}
 \end{align}
where $\rho_{mn} := (mk)^2 + n^2$ are the eigenvalues of $-\nabla^2$ in the Rayleigh--B\'enard domain, and $Q_{mn}^{\psi}$, $Q_{mn}^{\theta}$ consist of the sum of all quadratic terms in the corresponding ODE. The quadratic terms arise from the nonlinear terms of \eqref{eq:BE_psi}--\eqref{eq:BE_theta}. 
For fixed $(m,n)$, the terms in $Q_{mn}^{\psi}$ are proportional to $\psi_{pq}\, \psi_{rs}$ whose modal pairs lie in the set 
\begin{equation}
P_\psi[(m,n)] = \{((p,q), (r,s)) \in S_\psi \times S_\psi : m = |p \pm r|, n = |q \pm s|, (p,q) > (r,s) \}, \label{eq:psiPairs}
\end{equation}
where $(p,q) > (r,s)$ is the lexicographical ordering, defined by
\begin{equation} \label{eq:lexico}
(p,q) > (r,s) \iff p > r \mbox{ or } (p = r \mbox{ and } q > s).
\end{equation} 
The ordering restriction on $P_\psi[(m,n)]$ ensures that terms in $Q_{mn}^\psi$ are not double counted by commuting the modes. Modal pairs $ (\beta, \gamma) \in P_\psi$ can be combined with $\alpha := (m,n)$ to produce a ``compatible triplet" of modes $(\alpha, \beta, \gamma)$; in this triplet, two of the horizontal wavenumbers must sum to the other, and likewise for the vertical wavenumber. All nonlinear terms take this form because this is the condition for two Fourier modes to produce another after multiplication. Additionally, compatible triples admit the following symmetry property for $\alpha > \beta > \gamma$:
\begin{equation} 
(\beta, \gamma) \in P_\psi[\alpha] \implies (\alpha, \gamma) \in P_\psi[\beta] \mbox{ and } (\alpha, \beta) \in P_\psi[\gamma].
\end{equation}

The quadratic terms in the $\psi_{mn}$ equations are then given by
\begin{equation}
    Q_{mn}^\psi = \frac{k}{\rho_{mn}} \sum_{P_\psi[(m,n)]} \frac{\mu_1}{d} \big[B_{pmr} B_{snq} (p s) - B_{qns} (q r) \big] (\rho_{p q} - \rho_{r s}) \psi_{p q} \psi_{r s},  \label{eq:psiTruncQuad}
\end{equation}
where $B, \mu_1$ and $d$ are defined by
\begin{align} \label{eq:Bmatrix}
    B_{ijk} &= \begin{cases}
        -1, & i = j + k, \\
        1, & \mbox{ else},
        \end{cases} \displaybreak[1] \\
    \mu_1 &= \begin{cases}
        B_{pmr}, & (m+n) \ \mbox{ even}, (r+s) \  \mbox{ odd}, \\
        -B_{pmr},  & (m+n) \ \mbox{ odd},\; (r+s) \  \mbox{ odd}, \\
        -1, & {\rm else},
        \end{cases} \displaybreak[1] \\
    d &= \begin{cases}
        2, & p = 0 \ \mbox{or} \  r = 0, \\
        4, & \mbox{else}.
        \end{cases}
\end{align}
Similarly, the terms in $Q_{mn}^\theta$ are proportional to $\psi_{pq} \, \theta_{rs}$, represented by the set 
\begin{equation}
P_\theta[(m,n)] = \{((p,q), (r,s)) \in S_\psi \times S_\theta: m = |p \pm r|, n = |q \pm s|\}. \label{eq:thetaPairs}
\end{equation}
No ordering is needed on the pairs in $P_\theta$ since commuting the modal pairs yields a distinct term in the sum. Again, a compatible triplet can be formed of modes whose wavenumbers combine in the appropriate way. In this case $(\beta, \gamma) \in P_\theta[\alpha]$ implies $(\beta, \alpha) \in P_\theta[\gamma]$ (a different symmetry than that of $P_\psi$). The quadratic terms are then expressed as
\begin{equation} 
    Q_{mn}^\theta = k \sum_{P_\theta[(m,n)]} \frac{\mu_2}{d} \big[B_{pmr} B_{snq} (p s) - \mu_3 B_{qns} B_{rpm}(q r) \big] \psi_{pq} \theta_{rs}, \label{eq:thetaTruncQuad}
\end{equation}
where $B, \mu_1$ and $d$ are defined as above, and $\mu_2, \mu_3$ are defined by
\begin{align}
    \mu_2 &= \begin{cases}
    	\mu_3 \,B_{rpm}, & (m+n) \ \mbox{ even}, (r+s) \ \mbox{ odd}, \\
    	-B_{rpm} B_{pmr}, & (m+n) \ \mbox{ odd}, \;(r+s) \ \mbox{ even}, \\
    	B_{pmr}, & (m+n) \ \mbox{ even}, (r+s) \ \mbox{ even},\\
    	1 & \mbox{ else},
    	\end{cases}\displaybreak[1] \\
    \mu_3  &= \begin{cases}
        -1, & m=0, \\
        1, & \mbox{else},
        \end{cases} \displaybreak[1] \\
\end{align}
%Discussions of the model construction procedure can be found in \cite{Saltzman62,Thiffeault1995,Thiffeault1996}, where expressions for low-order models are derived that are equivalent to those above.
Writing the equations in the form \eqref{eq:psiTrunc}--\eqref{eq:thetaTrunc} allows convenient algorithmic construction of the truncated system. In this work, all ODE models were generated via the Python package \texttt{construct\_roms} developed for this study that outputs the model equations ODE models given only the sets $S_\psi$ and $S_\theta$.
 
Given an ROM of the form \eqref{eq:psiTrunc}--\eqref{eq:thetaTrunc}, a version of the Nusselt number can be defined by projecting either version of Nu described in \S\ref{sec:RBC} onto $S_\psi \cup S_\theta$. 
Since these measures of heat transport are approximations of the Nusselt number of the PDE, we use $N$ to denote the truncated version of the quantity Nu in each reduced model. When the truncated Fourier series are inserted into the volume-averaged definition of Nu~\eqref{eq:Nusselt}, orthogonality reduces the expression to
\begin{equation} \label{eq:N_gen}
N = 1 + \tfrac{1}{4} \sum_{\mathclap{(m,n) \in S_\psi \cap S_\theta}} (-1)^{m+n} (mk)\, \ov{\psi_{mn} \theta_{mn}}.
\end{equation}
Alternatively, deriving the expression for $N$ from~\eqref{eq:Nusselt2} yields
\begin{equation} \label{eq:N2_gen}
N = 1 + \sum_{\mathclap{(0,2n) \in S_\theta}} (2n) \, \ov{\theta_{0,2n}}.
\end{equation}
Whether the two definitions of $N$ are equivalent in the long-time average depends on the choice of modes in the truncation model. Equivalence of these expressions is a desirable property for ROMs because an analogous result holds for the expressions \eqref{eq:Nusselt} and \eqref{eq:Nusselt2} along statistically steady solutions to the Boussinesq equations. Motivated by the study of optimal heat transport in Rayleigh--B\'enard convection, we seek to determine the maximum Nusselt number, $N^*$, given by
\begin{equation} \label{eq:Nmax}
N^* := \sup_{\bx(t)} N,
\end{equation}
where the supremum is taken over all solutions $\bx(t)$ for a given reduced-order model, and generally depends on the parameters $\R, \sigma$ and $k$.

\section{The HK hierarchy} \label{sec:HK}

Constructing a Galerkin-truncated model of the Boussinesq equations requires selecting a finite set of modes in the Fourier expansions for $\psi$ and $\theta$. There is no universally accepted way to choose which modes to include, although a few guidelines have been established to promote consistency with the Boussinesq equations. Authors in previous studies of low-order models~\cite{Treve1982,Thiffeault1995,Hermiz1995,Gluhovsky2002} have suggested imposing criteria based on conservation laws derived from the Boussinesq equations. In this section, we examine a hierarchy of reduced-order convection models obeying such properties. Solutions to these models are then analyzed in \S\ref{sec:HKpart} and upper bounds on heat transport are computed in \S\ref{sec:UB}.

%We apply the auxiliary function method to construct upper bounds on maximal heat transport, $N^*$ \eqref{eq:Nmax}, for several models in the hierarchy. Various solutions of each model are computed through numerical bifurcation analysis and numerical integration of the ODEs. Each particular solution provides a lower bound on maximal heat transport for the chosen truncated model and allows comparison with the upper bounds. Whenever the Rayleigh number is sufficiently small so that the reduced system can be considered a reliable approximation to the PDE, the analytical upper bounds are saturated by the branch of equilibria that arise as the first instability of the zero state.

The truncated models considered here are constructed according to \eqref{eq:psiTrunc}--\eqref{eq:thetaTrunc} with the sets of modal pairs $(m,n)$ of all selected modes for $\psi$ and $\theta$ given by $S_\psi$ and $S_\theta$. We consider models that are distinguished in the sense that each model satisfies the following  energy, temperature, and vorticity balance laws:
\begin{align}
\partial_t \left[ \tfrac{1}{2} \left\langle \lvert\nabla \psi \rvert^2 \right\rangle + \sigma \R \langle z \theta \rangle \right] &=  \, \sigma \R \left\langle z \nabla^2 \theta \right\rangle - \sigma \left\langle (\nabla^2 \psi )^2 \right\rangle, \label{eq:EConsD_1} \\[6pt]
\partial_t \langle \theta \rangle &=  \, \left\langle \nabla^2 \theta \right\rangle, \label{eq:TConsD_1}\\[6pt]
\partial_t \left\langle \nabla^2 \psi \right\rangle &= \, \sigma \left\langle \nabla^4 \psi \right\rangle.\label{eq:VConsD_1}
\end{align}
An ROM satisfies a given balance law if equality holds after substituting the truncated Fourier series into the balance equation. As proved in \ref{sec:ConsLaws}, a truncated model satisfies \eqref{eq:EConsD_1}--\eqref{eq:VConsD_1} if modes are selected according to the following criteria:
\begin{crit2}[Energy balance]\label{rule:Econs}
If $(m,n) \in S_\psi \cap S_\theta,$ then $(0,2n) \in S_\theta$ \cite{Thiffeault1995}.
\end{crit2}
\begin{crit2}[Vorticity balance]\label{rule:Vcons}
If $(p,q) \in S_\psi$ and $(p,s) \in S_\psi$, then $(0,\lvert q-s \rvert) \in S_\psi$ if and only if ${(0,q+s) \in S_\psi}$.
\end{crit2}
\noindent The benefits of considering such models include boundedness of all trajectories of the ODEs \cite{Thiffeault1995} and equivalence of the two definitions of the time-averaged Nusselt number, \eqref{eq:N_gen} and \eqref{eq:N2_gen}.

We direct our focus to the subset of truncated models including one or more ``shear modes"---i.e., those of the form $\psi_{0n}$. The smallest model obeying each of the above criteria is the HK4 model, a modified version of the Lorenz equations that includes the Fourier modes $\psi_{01}$, $\psi_{11}$, $\theta_{02}$, and $\theta_{11}$. In this model, the shear mode decays exponentially along all orbits since it satisfies the simple uncoupled equation $\dot{\psi}_{01} = -\sigma \psi_{01}$, so that the fully developed dynamics are indistinguishable from those of the Lorenz equations. The next smallest model---the first to exhibit nontrivial shear flow---is the HK8 model studied in \cite{Olson2020}. We construct a hierarchy of distinguished models with shear that build on these two initial cases. First, define an ordering on the mode pairs as $(m_1,n_1) > (m_2,n_2)$ if and only if
\begin{equation} \label{eq:modeOrder}
\begin{aligned}
m_1+n_1 > m_2+n_2 \mbox{ or } (m_1+n_1 = m_2+n_2 \mbox{ and } n_1 > n_2).
\end{aligned}
\end{equation}
Let HK$M_i$ be the $i^{th}$ model in the hierarchy, containing $M_i$ modes. To construct the next model:
\begin{enumerate}
\item Find the smallest pair with $m,n > 0$ not included in HK$M_i$, according to \eqref{eq:modeOrder}.
\item Add the corresponding modes $\psi_{mn}$ and $\theta_{mn}$ to the system.
\item Add $\theta_{0,2n}$ and $\psi_{0,2n-1}$, if not already included in HK$M_i$.
\end{enumerate}
The final condition ensures that all models in the hierarchy are distinguished according to the above \hyperref[rule:Econs]{energy} and \hyperref[rule:Vcons]{vorticity} rules. In the limit as $M_i \to \infty$, the procedure enumerates all pairs $(m,n)$ with strictly positive indices, all shear modes with odd $n$, and all temperature modes with $m = 0$ and even $n$. Therefore all Fourier modes that satisfy the phase convention are included for sufficiently large $M_i$. The modes in the truncated Fourier series for several models in the hierarchy are listed in Table \ref{tab:HKModes}, and a schematic of the selection procedure is shown in Figure \ref{fig:modePairs}.

\begin{table}[!ht]
\centering
\ra{1.1}
\caption{Additional modes required to construct each HK model from the previous one in the hierarchy up to $M_i = 44$. Each model in the table includes all modes listed in the lines preceding it.} \label{tab:HKModes}
\begin{tabular}{ccccc}
\toprule
Model & Additional modes && Model & Additional Modes \\
\cmidrule{1-2} \cmidrule{4-5}
HK4 & $\psi_{01}, \psi_{11}, \theta_{02}, \theta_{11}$ &&  HK26 & $\psi_{32}, \theta_{32}$ \\
HK8 & $\psi_{03}, \psi_{12}, \theta_{04}, \theta_{12}$ && HK28 & $\psi_{41}, \theta_{41}$ \\
HK10 & $\psi_{21}, \theta_{21}$ && HK32 & $\psi_{09}, \psi_{15}, \theta_{0,10}, \theta_{15}$ \\
HK14 & $\psi_{05}, \psi_{13}, \theta_{06}, \theta_{13}$ && HK34 & $\psi_{24}, \theta_{24}$ \\
HK16 & $\psi_{22}, \theta_{22}$ && HK36 & $\psi_{33}, \theta_{33}$  \\
HK18 & $\psi_{31}, \theta_{31}$ && HK38 & $\psi_{42}, \theta_{42}$ \\
HK22 & $\psi_{07}, \psi_{14}, \theta_{08}, \theta_{14}$ && HK40 & $\psi_{51}, \theta_{51}$ \\
HK24 & $\psi_{23}, \theta_{23}$ && HK44 & $\psi_{0,11}, \psi_{16}, \theta_{0,12}, \theta_{16}$\\
\bottomrule
\end{tabular}
\end{table}

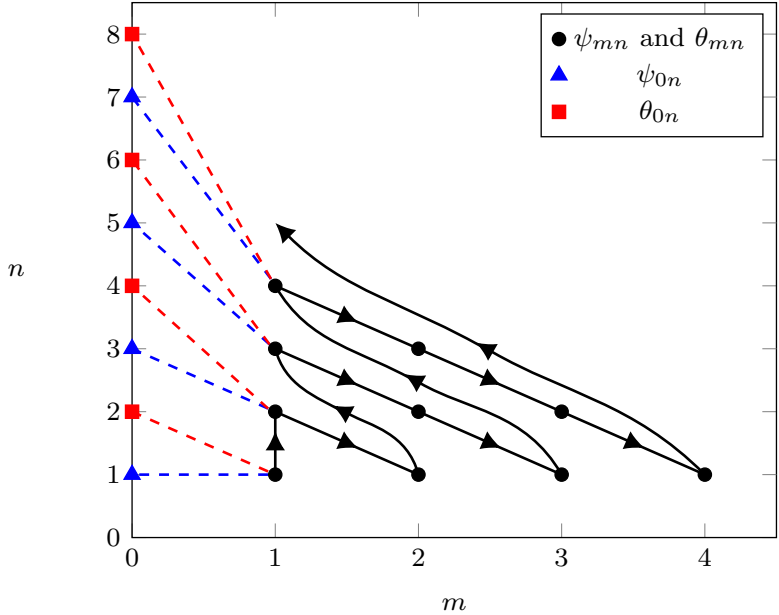
\begin{figure}[!ht]
\centering
\begin{tikzpicture}[>={Latex[scale=1,length=6,width=6]},scale=1.25,font=\footnotesize]
\begin{axis}[nodes near coords,nodes near coords align={below right},
			xmin=0, xmax=4.5, ymin=0, ymax=8.5,
			xlabel=$m$, ylabel=$n$, ylabel style={rotate=-90},
			extra y ticks= {1,3,5,7},
			scatter/classes={a={mark=*,black},
				b={mark=triangle*,blue,scale=1.4},
				c={mark=square*,red,scale=1}},
			legend entries={$\psi_{mn}$ and $\theta_{mn}$,$\psi_{0n}$,$\theta_{0n}$}]
				\addlegendimage{only marks,mark=*, black} 	
				\addlegendimage{only marks,mark=triangle*,blue, every mark/.append style={scale=1.4}}
				\addlegendimage{only marks, mark=square*,red,every mark/.append style={scale=1}}
			after end axis/.code={
				\draw[->,thick] (axis cs:1,1) -- (axis cs:1,1.7); % (1,1) -> (1,2)
					\draw[thick] (axis cs:1,1.5) -- (axis cs:1,2);
				\draw[-,dashed,blue,thick] (axis cs:.95,1) -- (axis cs:0,1); % (0,1)
				\draw[-,dashed,red,thick] (axis cs:.95,1.05) -- (axis cs:0,2); % (0,2) 
			    \draw[->,thick] (axis cs:1,2) -- (axis cs:1.58,1.42); % (1,2) -> (2,1)
			    	\draw[thick] (axis cs:1.5,1.5) -- (axis cs:2,1);
		    	\draw[-,dashed,blue,thick] (axis cs:.95,2.05) -- (axis cs:0,3); % (0,3)
			    \draw[-,dashed,red,thick] (axis cs:.95,2.05) -- (axis cs:0,4); % (0,4)
			    \draw[thick] (axis cs:2,1) edge[out=105,in=-28,->]  (axis cs: 1.4,2.1); % (2,1) -> (1,3)
   			    	\draw[thick] (axis cs:1.5,1.97) edge[out=152,in=-75,-]  (axis cs: 1,3);
   			   \draw[-,dashed,blue,thick] (axis cs:.96,3.07) -- (axis cs:0,5); % (0,5)
   			    \draw[-,dashed,red,thick] (axis cs:.96,3.07) -- (axis cs:0,6); % (0,6) 
			    \draw[->,thick] (axis cs:1,3) -- (axis cs:1.58,2.42); % (1,3) -> (2,2)
   			    	\draw[thick] (axis cs:1.5,2.5) -- (axis cs:2,2);
   			    \draw[->,thick] (axis cs:2,2) -- (axis cs:2.58,1.42); % (2,2) -> (3,1)
  			    	\draw[thick] (axis cs:2.5,1.5) -- (axis cs:3,1);
  			    \draw[thick] (axis cs:3,1) edge[out=120,in=-30,->]  (axis cs: 1.9,2.6); % (3,1) -> (1,4)
   			    	\draw[thick] (axis cs:2,2.46) edge[out=150,in=-60,-]  (axis cs: 1,4);
   			   	\draw[-,dashed,blue,thick] (axis cs:.97,4.11) -- (axis cs:0,7); % (0,7)
   			    \draw[-,dashed,red,thick] (axis cs:.97,4.11) -- (axis cs:0,8); % (0,8)
  			    \draw[->,thick] (axis cs:1,4) -- (axis cs:1.58,3.42);
 			    	\draw[thick] (axis cs:1.5,3.5) -- (axis cs:2,3);
 			    \draw[->,thick] (axis cs:2,3) -- (axis cs:2.58,2.42);
			    	\draw[thick] (axis cs:2.5,2.5) -- (axis cs:3,2);
			    \draw[->,thick] (axis cs:3,2) -- (axis cs:3.58,1.42);
   			    	\draw[thick] (axis cs:3.5,1.5) -- (axis cs:4,1);
   			    \draw[thick] (axis cs:4,1) edge[out=135,in=-30,->]  (axis cs: 2.4,3.1); % (4,1) -> (1,5)
   			       	\draw[thick] (axis cs:2.5,2.96) edge[out=150,in=-45,->]  (axis cs: 1,5);
				}]
\addplot [scatter,only marks,scatter src=explicit symbolic]
	coordinates {
		(1,1) [a]
		(0,1) [b]
		(0,2) [c]
		(1,2) [a]
		(2,1) [a]
		(0,3) [b]
		(0,4) [c]
		(1,3) [a]
		(2,2) [a]
		(3,1) [a]
		(0,5) [b]
		(0,6) [c]
		(1,4) [a]
		(2,3) [a]
		(3,2) [a]
		(4,1) [a]
		(0,7) [b]
		(0,8) [c]};
\end{axis}
\end{tikzpicture}
\caption{Schematic of the mode selection procedure for the first 10 models in the HK hierarchy, where arrows indicate the order of selection and modes connected with dashed lines are added simultaneously. Modes with $m = 0$ are added when the next $(m+n)$ shell is reached. Each model contains the modes of all previous models, beginning with the HK4 system that is represented by the point at $(1,1)$ along with the modes connected by dashed lines.} \label{fig:modePairs}
\end{figure}

\section{Particular solutions of models in the HK hierarchy} \label{sec:HKpart}

In this section, we examine particular solutions of reduced-order models in the HK hierarchy that was defined in the previous section. Solutions to the HK models provide candidates for the optimal heat transport in a given model and therefore bound $N^*$ from below. In this paper, we primarily focus on the equilibria of these models. Equilibria are of particular interest because it is conjectured that steady solutions maximize heat transport in the Boussinesq equations \cite{Wen2020}. Indeed, for models studied in this work, equilibria provide greater heat transport than any computed time dependent solutions for $\R$ well beyond the each model's capability to closely approximate the PDE.

\subsection{Equilibria of reduced-order models} \label{sec:HKEquil}

The types of equilibria present depend on the set of included modes. For all HK models, the zero equilibrium is globally attracting for sufficiently small $\R$. As the Rayleigh number increases, the zero state undergoes a series of pitchfork bifurcations. Each pitchfork bifurcation gives rise to a pair of equilibria via an instability in the variables $\psi_{mn}$ and $\theta_{mn}$, and there is exactly one pitchfork bifurcation from the zero state for each such pair included in the model.
%These bifurcations are in one-to-one correspondence with the \emph{Lorenz triplets} of the form $LT_{mn} = \{\psi_{mn}, \theta_{mn}, \theta_{0,2n}\}$ included in a given model (note that $\theta_{0,2n}$ may belong to more than one Lorenz triplet). 
These equilibria emerge at the Rayleigh number $\R_{mn}$, given by
\begin{equation} \label{eq:RLmn}
\R_{mn} := \left((mk)^2 + n^2\right)^3/(mk)^2.
\end{equation}
The HK4 model has one bifurcation from the zero state, at the Rayleigh number $\R_{11} = \left(k^2 + 1 \right)/k^2$, and no additional bifurcations occur. Each successive model in the hierarchy introduces one additional pitchfork bifurcation from the zero state. 
%The equilibria that emerge at $\R_{mn}$ closely resemble the nonzero equilibria of the Lorenz equilibria near the bifurcation point. 
In some cases, especially in smaller models, these equilibria are direct analogues of the Lorenz equilibria for all values of $\R$. When this occurs we denote the corresponding equilibria by $L_{mn}$. For larger models, the equilibria emerging at $\R_{mn}$ can take a more complicated form, deviating form the Lorenz-like subspace due to nonlinear interactions between modes, as will be explored below. First, we examine the simpler case of the $L_{mn}$ equilibria.

The nonzero variables in the $L_{mn}$ equilibria are given by
\begin{equation} \label{eq:Lmn_simple}
\begin{gathered}
\psi_{mn} = \pm \tfrac{\sqrt{8}}{(mk)^2 + n^2} \sqrt{\R - \R_{mn}}, \qquad \theta_{mn} = \pm (-1)^{m+n} \tfrac{\sqrt{8}}{\R} \tfrac{(mk)^2 + n^2}{mk} \sqrt{\R - \R_{mn}}, \\ \theta_{0,2n} = \tfrac{1}{n \R} \left(\R - \R_{mn}\right).
\end{gathered}
\end{equation}
Each of the $L_{mn}$ equilibria can be related to the equilibria of the Lorenz equations by a linear change of variables after shifting the Rayleigh number by $\R_{mn} - \R_{11}$ to the corresponding bifurcation point. The value of $N$ at the $L_{mn}$ equilibria, computed from either \eqref{eq:N_gen} or \eqref{eq:N2_gen}, is 
\begin{equation} \label{eq:NLmn}
N_{L_{mn}} = 3 - 2 \R_{mn}/\R.
\end{equation}

The equilibria that bifurcate at $\R_{11}$ are of particular interest because they correspond to a pair of steady convection rolls that are globally attracting at onset. In the simplest case, these equilibria are called $L_{11}$ and are equivalent to the equilibria that emerge at the first instability of the Lorenz equations \cite{Lorenz63} and the HK8 model \cite{Olson2020}. The value of $\R_{11}$ is smallest when $k^2 = 1/2$, so we denote this value as the critical Rayleigh number, $\R_c$. 

In higher-order models of the HK hierarchy, the equilibria arising at $\R_{mn}$ often deviate from the subspace spanned by the variables of $L_{mn}$. This occurs due to pairing in the quadratic terms on the right-hand side of the ODEs. In the language of \S\ref{sec:constructModel}, the pairing occurs due to \emph{compatible triplets} that comprise terms in the sum $Q_{mn}^\psi$ and $Q_{mn}^\theta$.

Just as in the Lorenz-like case, the equilibria that emerge at $\R_{11}$ in larger models correspond to the onset of convection rolls when $k^2 = 1/2$, and arise due to an instability in the $\psi_{11}$--$\theta_{11}$ subspace. We generally call such equilibria the \emph{primary equilibria} of a given model in the HK hierarchy as a generalization of the $L_{11}$ branch. In the HK hierarchy, expressions for the primary equilibria first differ from the $L_{11}$ states in the HK14 system. The initial $L_{11}$ subspace ($\psi_{11}$, $\theta_{11}$, $\theta_{02}$) induces a perturbation in $\theta_{13}$ due to a term proportional to $\psi_{11} \theta_{02}$ in the $\theta_{13}$ ODE. As a result, $\theta_{13}$ must be nonzero along the primary branch. In turn, this activates the $\psi_{13}$ mode, due to the linear pairing that occurs between any $\psi$ and $\theta$ modes with equal wavenumber. Subsequent nonlinear pairing activates the $\theta_{04}$ and $\theta_{06}$ variables along the primary branch, resulting in equilibria that lie within a 7-dimensional subspace. The cascade of modes activating along the primary branch as described here is depicted in Figure \ref{fig:smallR}. This nonlinear pairing mechanism is observed for all models in the HK hierarchy with $M_i \geq 14$, since all such models include the mode $\theta_{13}$ that begins the cascade.

\begin{figure}[tp]
\centering
\includegraphics[scale=.5]{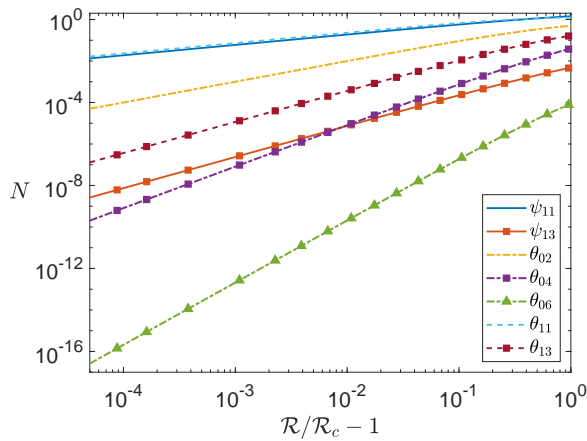}
\caption{Growth of the nonzero variables along the primary branch of the HK14 system near the pitchfork bifurcation from the zero state, with the model parameters fixed at $\sigma = 10$ and $k^2 = 1/2$. The difference in magnitude of the variables near the bifurcation point illustrates the cascade of modal activation from the initial subspace ($\psi_{11}$--$\theta_{11}$) to the $\theta_{06}$ mode that completes the fully developed subspace of the primary branch.}
\label{fig:smallR}
\end{figure}

A consequence of the above mechanism is that the primary equilibria only contain modes with even total wavenumber ($m + n$). This is because the interaction of two modes of even wavenumber in the nonlinear parts of \eqref{eq:psiTrunc}--\eqref{eq:thetaTrunc} can only excite a mode of even wavenumber (see \ref{sec:proofs} for details). Therefore, the primary branch cannot contain shear modes, since all such modes have odd total wavenumber under the phase convention described in \S\ref{sec:simplifyModel}. On the other hand, all modes of the form $\theta_{0,2n}$ are eventually activated on the primary branch for sufficiently large $M_i$. This augments the heat transport as the system size increases. It is possible that the heat transport in the PDE is also subject to a similar modal cascade as the Rayleigh number grows. The nonzero modes in the primary equilibria are summarized in Table \ref{tab:primaryEquil}.

The primary branches of equilibria for several reduced-order models are depicted in Figure \ref{fig:HKStream}. For each of these models, the temperature profiles display slightly unphysical behavior at $\R = 5\R_c$, including internal temperature maxima. These features are not present at the onset of convection and develop at some larger $\R$, indicating that the reduced model is not capturing the full physics of the Boussinesq equations.

In general, the expressions for the equilibria that emerge at $\R_{mn}$ deviate from \eqref{eq:Lmn_simple} for models large enough such that $\psi_{m,3n}$ and $\theta_{m,3n}$ are included. Similar to the cascade examined for the primary equilibria, nonlinear pairing between $\psi_{mn}$ and $\theta_{0,2n}$ occurs in the $\theta_{m,3n}$ equation, so that additional variables must be nonzero. We observe that the inclusion of additional modes generally enhances heat transport when compared to \eqref{eq:NLmn}. 

\begin{table}[!ht]
\centering
\ra{1.1}
\caption{Models that complete each shell in the HK hierarchy, along with the terms added to the horizontally-averaged version of the truncated Nusselt number \eqref{eq:N2_gen}, and the number of nonzero terms in the fully-developed primary branch of equilibria, $n_{primary}$.} \label{tab:primaryEquil}
\begin{tabular}{cccc}
\toprule
Shell & Largest model & New term in \eqref{eq:N2_gen} & $n_{primary}$ \\
\midrule
1 & HK4 & $\theta_{02}$ & 3 \\
2 & HK10 & $\theta_{04}$ & 3 \\
3 & HK18 & $\theta_{06}$ & 11 \\
4 & HK28 & $\theta_{08}$ & 11 \\
5 & HK40 & $\theta_{0,10}$ & 23 \\
6 & HK54 & $\theta_{0,12}$ & 23 \\
7 & HK70 & $\theta_{0,14}$ & 39 \\
8 & HK88 & $\theta_{0,16}$ & 39 \\
9 & HK108 & $\theta_{0,18}$ & 59 \\
10 & HK130 & $\theta_{0,20}$ & 59 \\
11 & HK154 & $\theta_{0,22}$ & 83 \\
\bottomrule
\end{tabular}
\end{table}

\begin{figure}[!htb]
\centering
\begin{tikzpicture}
\node at (0,0) {\includegraphics[scale=0.5,trim = 0 35 0 0, clip]{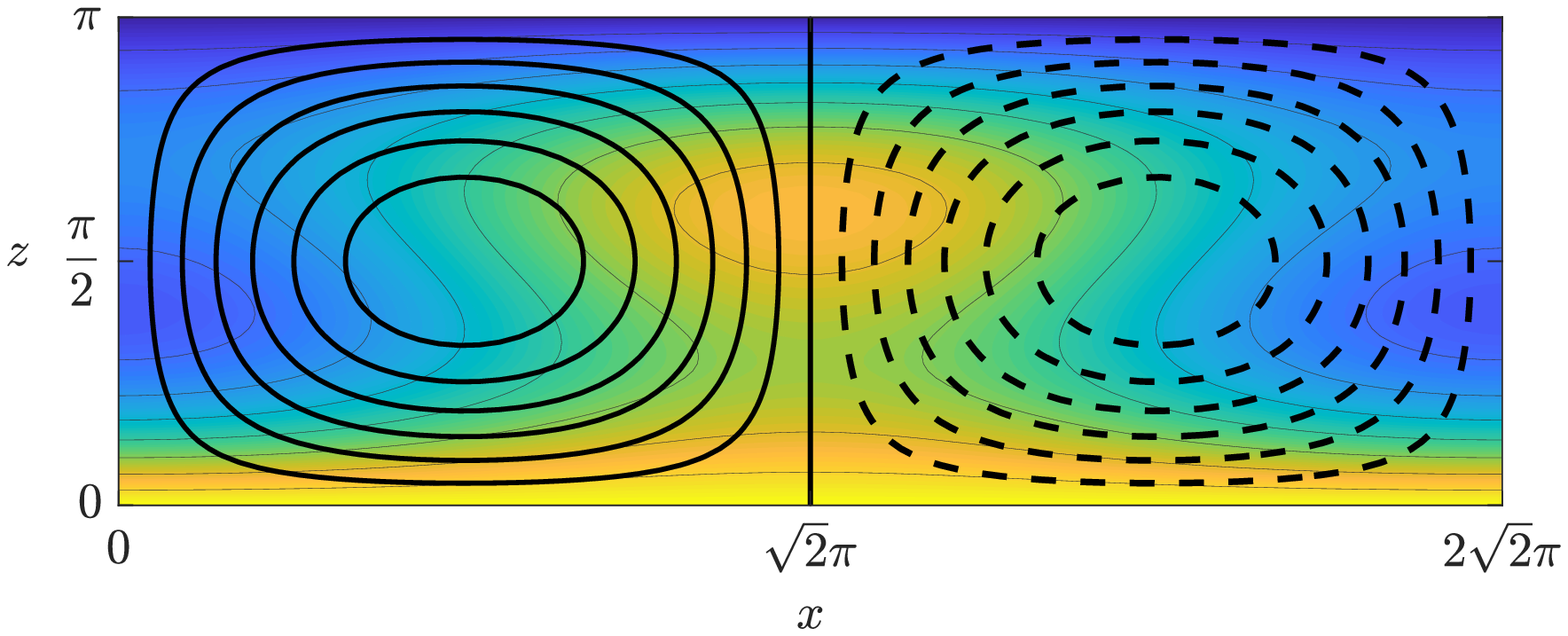}};
\node at (-5.1,1.3) {\small (a)};
\end{tikzpicture}
\begin{tikzpicture}
\node at (0,0) {\includegraphics[scale=0.5,trim = 0 35 0 10, clip]{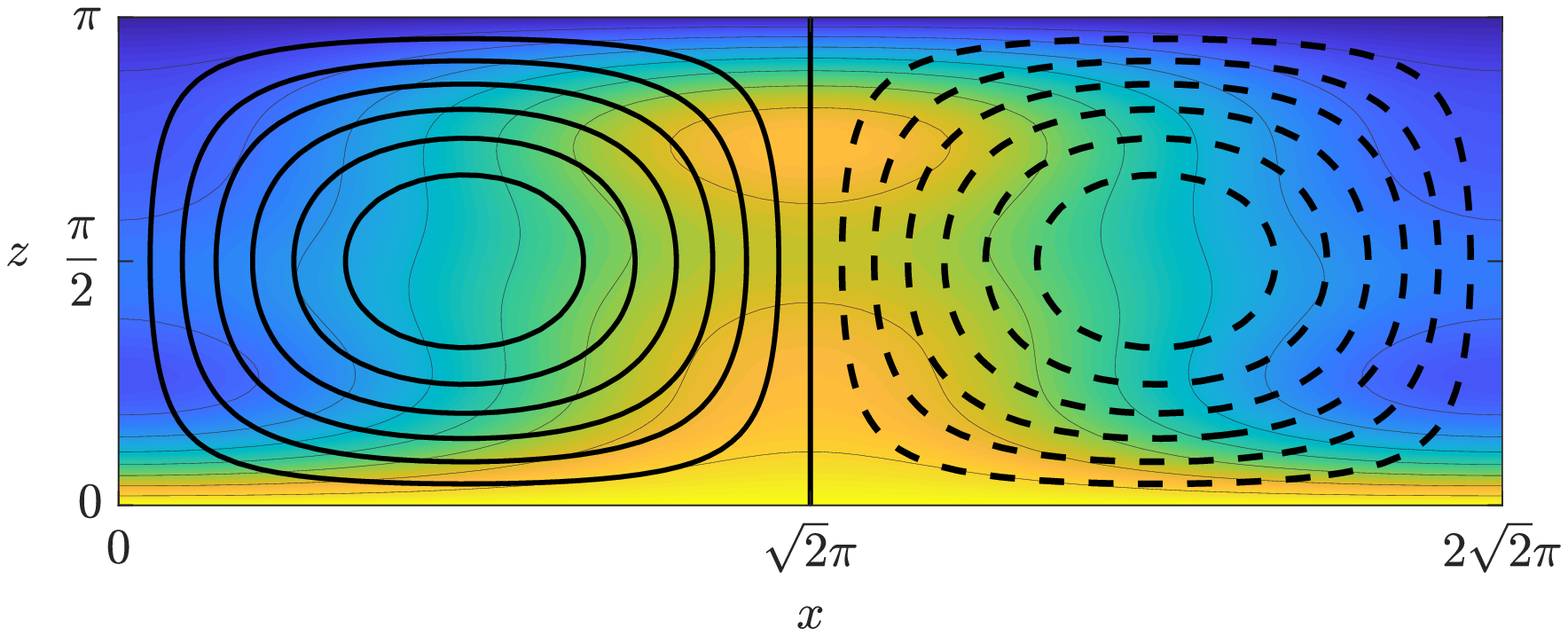}};
\node at (-5.1,1.2) {\small (b)};
\end{tikzpicture}
\begin{tikzpicture}
\node at (0,0) {\includegraphics[scale=0.5,trim = 0 35 0 10, clip]{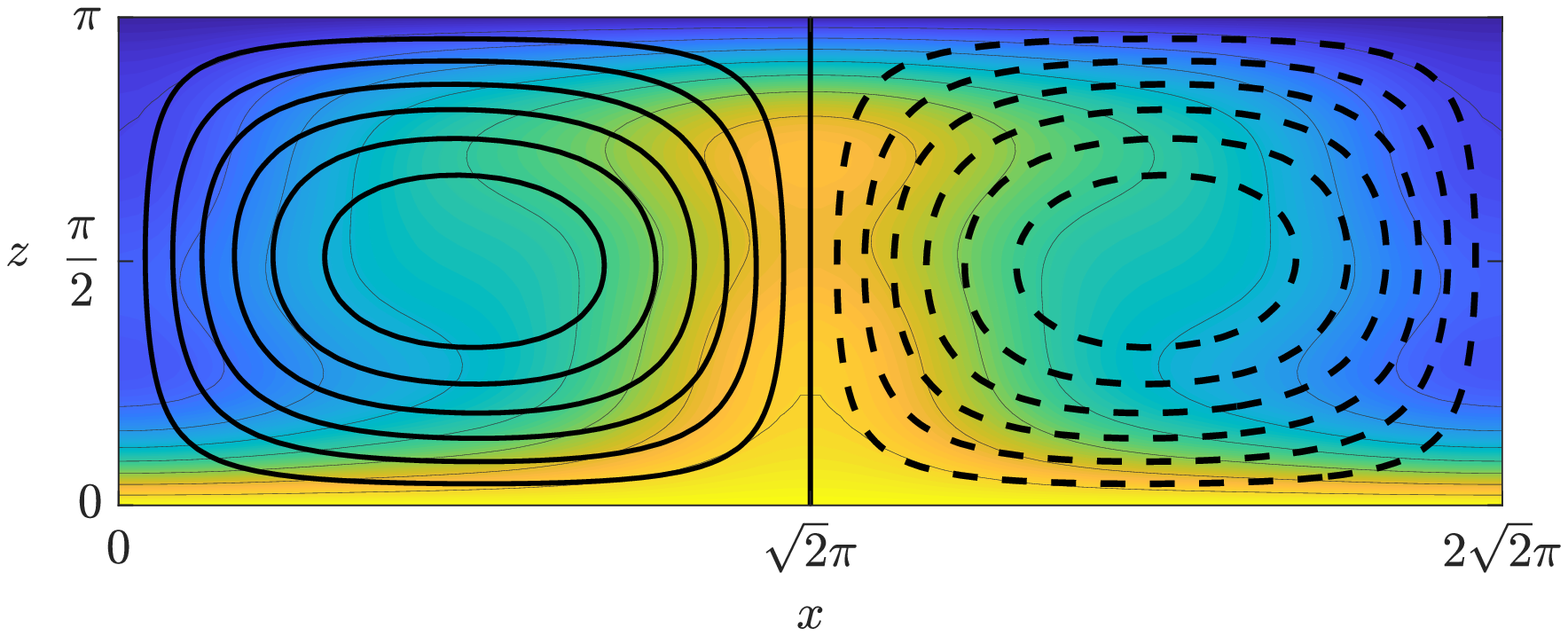}};
\node at (-5.1,1.2) {\small (c)};
\end{tikzpicture}
\begin{tikzpicture}
\node at (0,0) {\includegraphics[scale=0.5,trim = 0 0 0 10, clip]{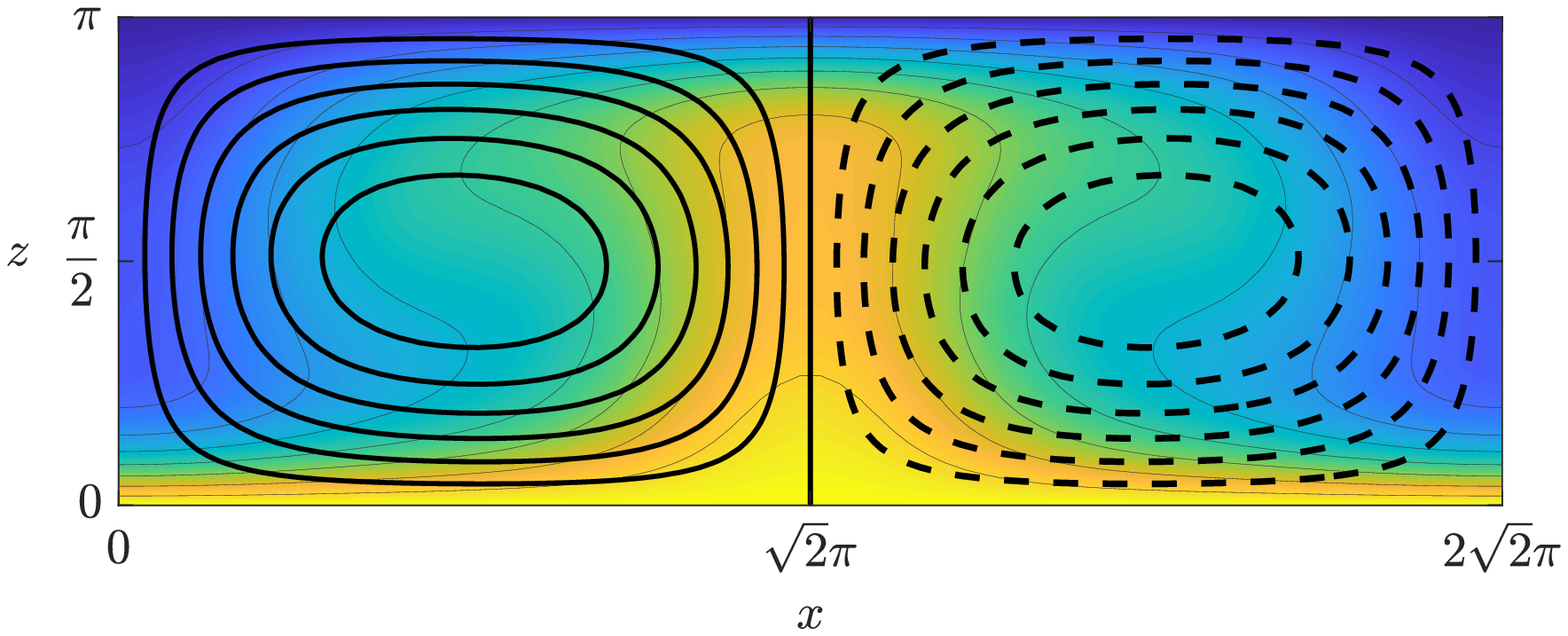}};
\node at (-5.1,1.4) {\small (d)};
\end{tikzpicture}
\caption{Temperature ($T$) contours for approximations of steady convection with modal amplitudes in the Galerkin expansion \eqref{eq:Galerkin} are given by the primary branches of equilibria. The plots above depict several reduced models at $k^2 = 1/2$, $\sigma = 10$ and $\R = 5 \R_c$: (a) the HK8 model (b) the HK14 model (c) the HK22 model and (d) the HK32 model. The temperature ranges from 1 (light) to zero (dark). Solid and dashed streamlines indicate lines of positive and negative vorticity, respectively. In each model, additional modes pair with the nonzero variables of the $L_{11}$ states due to nonlinear interactions between modes, enhancing heat transport across the domain.
%[[edit]]
} \label{fig:HKStream}
\end{figure}

We studied the bifurcation structure of the HK$M_i$ models in more detail with the numerical continuation package MATCONT \cite{Dhooge2003}. For these and subsequent computations within this section, the model parameters $k^2$ and $\sigma$ were fixed at $1/2$ and $10$ respectively. To improve numerical stability for all computations performed on the HK models, we scale the $\psi$ modes by $\R^{-1/2}$ and time by $\R^{1/2}$ (see \ref{sec:NumProc}). As a result, the Nusselt number for the scaled HK models is 
\begin{equation} \label{eq:N_HK}
N = 1 + \sum_{\mathclap{(0,2n) \in S_\theta}} (2n) \, \ov{\theta_{0,2n}}.
\end{equation}
With $M_i \leq 44$, we locate all branches of equilibria detectable for $\R/\R_c < 1000$, with $\R$ as the bifurcation parameter. We began by continuing all branches that bifurcate from the zero state at $\R_{mn}$. For these and each additional branch located, we continued the equilibria until $\R/\R_c$ was at least 1000, or until the curve terminated. This process was repeated for each equilibria stemming from any of the pitchfork bifurcations detected, until no additional branches of equilibria were found. The number of equilibria grows rapidly as $M_i$ increases; we detected nine nonzero equilibria for the HK14 model, while 127 equilibrium branches were detected for the HK40 system. Results for the HK10 and HK14 models are displayed in Figure \ref{fig:bfcs}. The primary equilibria of the HK14 model display appreciably greater heat transport than those of the HK10 model at all values of $\R$ past onset. This is due to the mechanism described above where the $L_{11}$ equilibria are augmented with additional modes.

\begin{figure}[tp]
\centering
\begin{tikzpicture}
\node at (0,0) {\includegraphics[scale=.5]{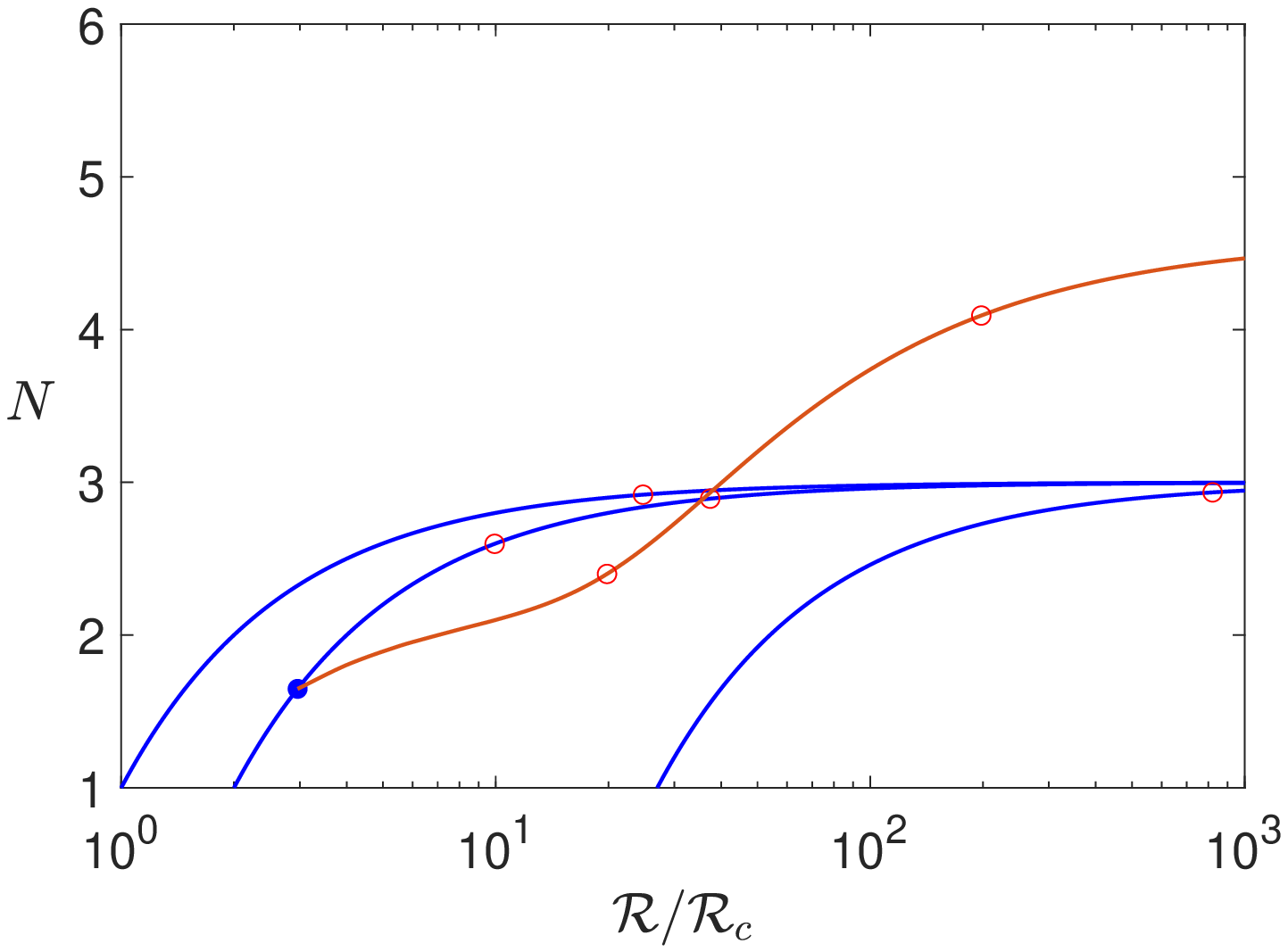}};
\node at (-2.3,2.2) {HK10};
\end{tikzpicture}
\begin{tikzpicture}
\node at (0,0) {\includegraphics[scale=.5, trim=19 0 0 0, clip]{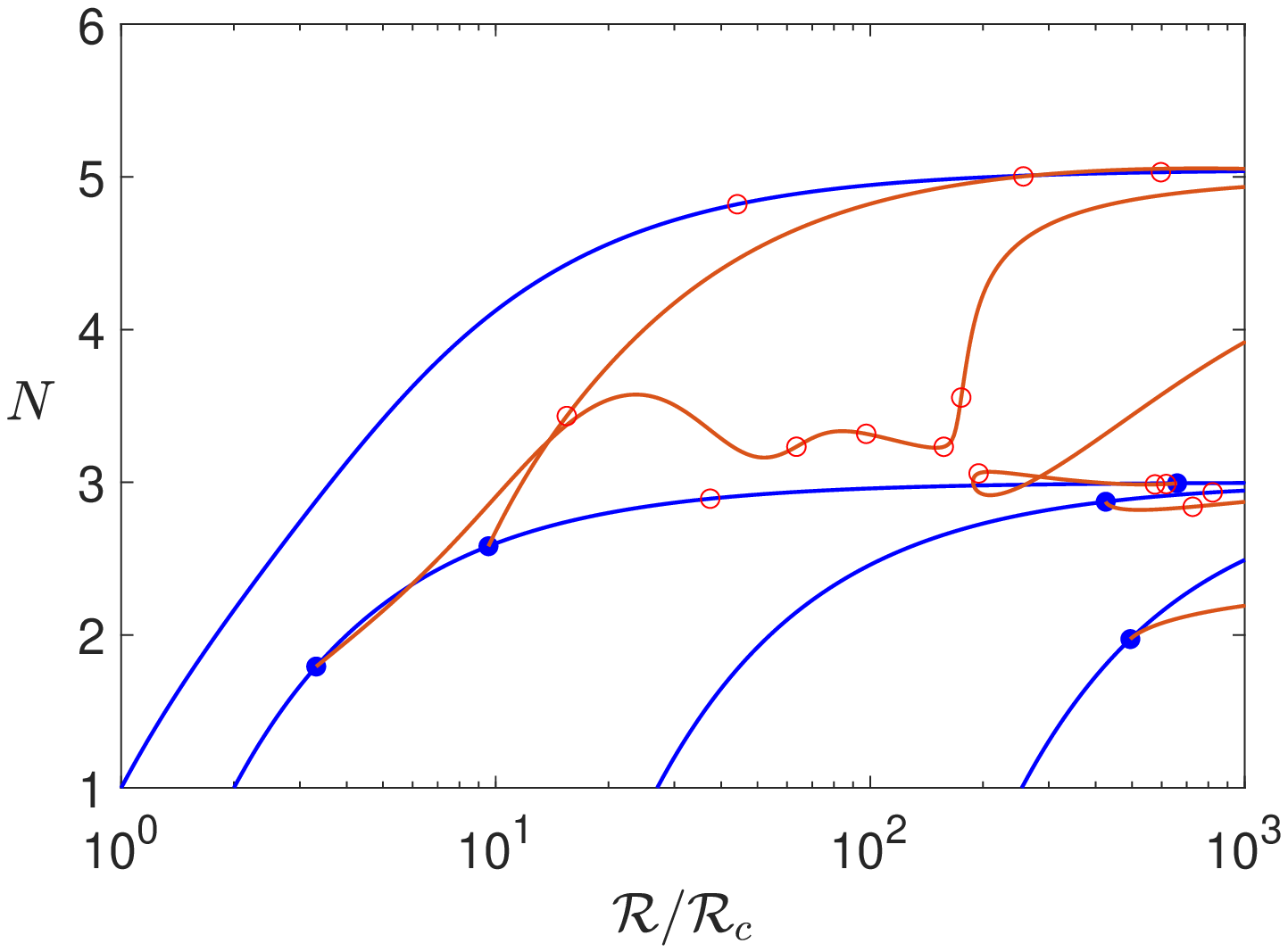}};
\node at (-2.5,2.2){HK14};
\end{tikzpicture}
\caption{Bifurcation diagrams for the HK10 and HK14 models, at $k^2 = 1/2$ and $\sigma = 10$. Filled circles indicate pitchfork bifurcations and open circles denote Hopf bifurcations. All curves were computed by numerical continuation with a resolution of approximately 0.1 in units of $\R/\R_c$.}  \label{fig:bfcs}
\end{figure}

To obtain candidates for the maximum heat transport for a given model, we maximized $N$ among all computed equilibria for each model with $M_i \leq 40$. In all cases, the primary equilibria are maximal from onset until some larger value of $\R$. When $M_i > 4$, the heat transport of the primary branch is eventually surpassed by that of some other equilibrium branch, but this only happens when $\R$ is well beyond the point where unphysical behavior is first observed, and may simply be an artifact of the truncation. The primary equilibria were compared to the analogous states of the Boussinesq equations---those arising from the first instability from the static state. We observe that the HK14 model predicts slightly larger heat transport than the PDE for some $\R$. This may occur as a result of the partially filled shell in the hierarchy, where only one Lorenz triple with $m+n=4$ is included in the truncation. For models that complete a shell (HK10, HK18, and so on), the value of $N$ along primary equilibria closely approximates Nu at corresponding PDE steady state for small $\R$. The interval of $\R$ where $N \approx \mathrm{Nu}$ increases as $M_i$ is raised.

The bifurcation structure of the HK models becomes more complex with larger $M_i$, and the number of equilibria and Hopf bifurcations rapidly increases with the dimension of the ODE. Therefore it is not practical to attempt to locate every equilibrium branch when the dimension becomes sufficiently large. For larger models, we consider only the primary branch of equilibria, and conjecture that these equilibria transport heat optimally at all physically relevant values of $\R$. The primary equilibria for several models are shown in Figure \ref{fig:bfc_comp_pde} alongside the Nusselt number of the analogous steady state of the PDE. 

%Changing the value of $k$ shifts the location of the bifurcation points at $\R = \R_{mn}$, with $\R_{11}$ increasing as $k^2$ is moved away from $1/2$ in either direction. At sufficiently small $k$, the first bifurcation point is $\R_{m1}$ for some $m > 1$. For instance, $\R_{21}$ is the smallest bifurcation point when $k^2 \in [0.0826,0.2435]$.

\begin{figure}[ht]
\centering
\hspace{0.5cm} \begin{tikzpicture}
\node at (0,0) {\includegraphics{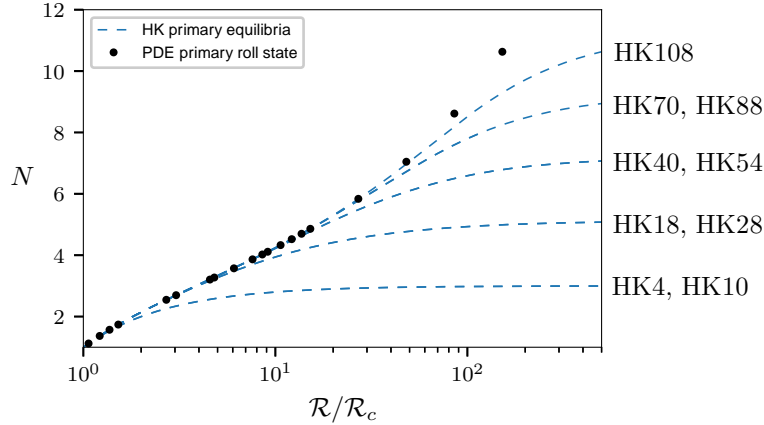} };
\node at (4.6, -1.25) {HK4, HK10};
\node at (4.7, -.42) {HK18, HK28};
\node at (4.7, .4) {HK40, HK54};
\node at (4.7, 1.16) {HK70, HK88};
\node at (4.25, 1.9) {HK108};
\end{tikzpicture}
\caption{Heat transport along the primary branch of equilibria found via numerical continuation for several selected truncated models that complete a shell in the hierarchy. Values of Nu at the equilibria arising from the first instability of the Boussinesq equations---analogues to the primary equilibria---are plotted for comparison. For each model with $M_i \leq 40$, the primary branches were determined to provide maximal heat transport among all equilibria until $\R/\R_c$ was greater than 20. Data for the PDE was computed by Baole Wen \cite{Wen2020}.} \label{fig:bfc_comp_pde}
\end{figure}

\subsection{Time integration of the HK ODEs} \label{sec:TimeDep}
For models in the HK hierarchy with $M_i \leq 44$, additional candidates for the maximal $N$ were obtained by directly computing the time average integral \eqref{eq:timeAvg} with $\Phi = N$. Numerical solutions were obtained by directly integrating the ODEs for $10^4$ to $10^5$ time units with the MATLAB solver \texttt{ode45} starting from randomly generated initial conditions within $[-1,1]^{M_i}$. The absolute and relative tolerances of the solver were set to $10^{-12}$ and $10^{-9}$, respectively. In cases where more than one stable solution was detected at a particular Rayleigh number, the maximum $N$ is computed among such solutions. The results are compared with the value of $N$ along equilibrium branches and sum-of-squares upper bounds in Figure \ref{fig:boundsHK}. 

When the Rayleigh number is sufficiently small, integrating the ODEs yields stable equilibria, while at larger $\R$ we identify attracting sets that are time-periodic or even chaotic. The heat transport along time-dependent trajectories is found to be smaller than that of the primary equilibrium branch whenever $\R$ is small enough to produce a meaningful comparison to the physics of the PDE. A more detailed study of the time integration of reduced-order models within a similar framework can be found in \cite{Park2021}.

\section{Sum-of-squares optimization} \label{sec:SOS}
Time averages of functions of dynamical variables are often of interest for nonlinear dynamical systems, more so than the value of the function at any particular instant in time. In recent years, a general technique has been developed to determine upper or lower bounds on time-averaged quantities for ordinary differential equations~\cite{Cherny2014}. Such results are global in the sense that they provide a bound on a given quantity over all solution trajectories with initial conditions in a given domain. These methods involve choosing an \emph{auxiliary function} defined on the state space of the ODE that facilitate proving the desired bound. Using auxiliary functions to prove bounds on time averages is reminiscent of the common technique of using Lyapunov functions to determine global stability properties for ODEs. Like the functions used in Lyapunov's method, auxiliary functions need not have any particular relationship to the system dynamics. The auxiliary function method has been applied for a variety of systems~\cite{Goluskin2018,Goluskin2019,Fantuzzi2016},  including modal approximations of PDEs such as the models described in \S\ref{sec:Galerkin}.
%Polynomial optimization problems typically involve constraints defined by non-negativity of a polynomial. 
We present an overview of the auxiliary function method in \S\ref{sec:TimeAvg}. The application to polynomial dynamical systems is discussed in \S\ref{sec:PolySOS}, allowing bounds to be computed numerically with techniques of convex optimization.

\subsection{Maximal time averages for ODEs} \label{sec:TimeAvg}
We first present the auxiliary function method applied to a generic well-posed autonomous ODE $\dot{\bx} = \mathbf{f}(\bx)$ governing the dynamics of $\bx \in \mathbb{R}^n$. Here $\dot{\bx}$ denotes the time derivative of $\bx$, and we assume $\mathbf{f}: \mathbb{R}^n \to \mathbb{R}^n$ is continuously differentiable and that all solutions to the ODE are bounded forward in time. Each solution of the ODE is a trajectory $\bx(t)$ corresponding to the initial condition $\bx(0) = \bx_0$. The long-time average of a continuous scalar quantity $\Phi(\bx)$ along the trajectory $\bx(t)$ is given by
\begin{equation} \label{eq:timeAvgPhi}
\ov{\Phi}(\bx_0) := \lim_{\tau \to \infty} \frac{1}{\tau} \int_0^\tau \Phi(\bx(t))\, \rm{d}t.
\end{equation}
%In this section, we define $\Phi = 1 + (2\theta_{02} +4\theta_{04})$ so that $\ov{\Phi} = N$~\eqref{eq:N2_gen}, the finite-dimensional analogue of the horizontally averaged Nusselt number for Rayleigh--B\'enard convection.

Computing~\eqref{eq:timeAvgPhi} exactly is only possible when trajectories of the ODE are known exactly, such as at the equilibria. In practice, when formulae for the relevant trajectories are not explicitly known, the time average may be estimated by numerically integrating the ODE over a sufficiently large time and using the result to approximate the limit in~\eqref{eq:timeAvgPhi}. Estimates obtained in this way may converge slowly, and are restricted to only sufficiently stable trajectories stemming from a set of chosen initial conditions. It is generally impossible to perform this computation over all relevant initial conditions, especially in systems exhibiting chaotic behavior.

%**haven't defined aux fcn yet?
Auxiliary functions~\cite{Cherny2014} allow another way to obtain information about time-averaged quantities for dynamical systems, without selecting a particular trajectory. The auxiliary function method provides bounds that are global in the sense that the bounds on~\eqref{eq:timeAvgPhi} hold over all trajectories (or equivalently, over all initial conditions) of the ODE. Accordingly, let $\ov{\Phi}^*$ be defined by
\begin{equation} \label{eq:PhiOpt}
\ov{\Phi}^* := \sup_{\bx_0 \in \mathbb{R}^n} \ov{\Phi}(\bx_0),
\end{equation}
%where $\Omega \subset \field{R}^n$ is a compact domain containing the attracting region of the ODE,
and seek a global upper bound $U \in \field{R}$ so that $U \geq \ov{\Phi}^*$. While $U$ must be independent of the choice of trajectory, it may depend on the values of various model parameters. For example, upper bounds on $N$ will generally depend on $\R$, $\sigma$, and $k$. 

A global upper bound on $\ov{\Phi}^*$ could be constructed by computing the pointwise supremum of $\Phi$ \cite{Tobasco2018}:
\begin{equation}
\ov{\Phi}^* \leq \sup_{\bx \in \mathbb{R}^n} \Phi(\bx).
\end{equation}
However, in practice this will almost always produce bounds that are much larger than $\ov{\Phi}^*$, and will not produce a meaningful result unless $\Phi$ is bounded on $\mathbb{R}^n$. More useful bounds can be determined by introducing a continuously differentiable auxiliary function ${V:\mathbb R^n\to\mathbb R}$. For any such function, the quantity ${\overline{\mathbf f\cdot\nabla V}}$ vanishes along bounded trajectories of the ODE, since \cite{Goluskin2018}:
\begin{align}
\label{eq:fGradV}
\overline{\mathbf f(\mathbf x(t))\cdot\nabla V(\mathbf x(t))} =
\ov{\tfrac{\rm d}{{\rm d}t} V(\mathbf{x}(t))}
%= \lim_{\tau \to \infty} \frac{1}{\tau} \int_0^\tau \tfrac{\rm d}{{\rm d}t} V(\bx(t)) \, {\rm d}t\\
= \lim_{\tau \to \infty} \frac{1}{\tau} \Big[V(\mathbf x(\tau)) - V(\mathbf x(0))\Big] = 0.
\end{align}
This identity implies that given any initial condition $\mathbf{x}_0 \in \mathbb{R}^n$ and $V \in C^1$,
\begin{equation}
\label{eq:bound1}
\overline\Phi = \ov{\Phi+\mathbf f\cdot \nabla V}\le
\sup_{\mathbf x\in \mathbb{R}^n}\left[\Phi(\bx)+\mathbf f(\bx)\cdot\nabla V(\bx)\right].
\end{equation}
%This is useful because computing or estimating the right-hand maximum requires no knowledge of trajectories. While $\Phi(\bx)$ may be unbounded over $\Omega$, a judicious choice of $V(\bx)$ makes the above maximum finite. Since \eqref{eq: bound1} applies to all trajectories in bounded systems and for all $V\in C^1$, it remains true when we maximize $\ov\Phi$ over initial conditions and minimize the upper bound over $V$ to find
The quantity on the right-hand side of \eqref{eq:bound1} can be computed or estimated without solving the ODE, and the resulting supremum is finite for properly chosen $V$. Since the choice of initial condition in \eqref{eq:bound1} was arbitrary, this also provides a bound on $\ov{\Phi}^*$. Making the optimal choice of auxiliary function results in the upper bound: 
%The advantage of using \eqref{eq:bound1} is that the quantity on the right-hand side can be computed or estimated without solving the ODE, and the resulting supremum is finite for properly chosen $V$. Since the choice of initial condition in \eqref{eq: bound1} was arbitrary, this also provides a bound on $\ov{\Phi}^*$. Making the optimal choice of auxiliary function results in the upper bound:
\begin{equation}
\label{eq: bound2}
\ov\Phi^* \le \inf_{V\in C^1}
\sup_{\mathbf x\in \mathbb{R}^n}\left[\Phi(\bx)+\mathbf f(\bx)\cdot\nabla V(\bx)\right],
\end{equation}
where $C^1$ denotes the class of continuously differentiable functions.

It was proved in~\cite{Tobasco2018} that for all bounded well-posed ODEs and continuous $\Phi(\bx)$, equality holds in \eqref{eq: bound2} with the optimization performed over a compact domain containing the attracting region of the ODE. Their result guarantees the existence of an auxiliary function (or sequence thereof) that yield arbitrarily sharp bounds on $\ov{\Phi}^*$. In practice, the infimum in~\eqref{eq: bound2} can often be attained \cite{Olson2020}. A convenient way to express~\eqref{eq: bound2} is to define a function $S(\bx)$ as
\begin{equation}
\label{eq:S}
S(\bx) := U -  \Phi(\bx) - \mathbf{f}(\bx) \cdot \nabla V(\bx).
\end{equation}
Then, an upper bound is implied by the nonnegativity of $S$, and the optimization problem~\eqref{eq: bound2} can be expressed as \cite{Tobasco2018}
\begin{equation}
\label{eq: bound3}
\ov\Phi^* = \inf_{\substack{V\in C^1\\[2pt]S\ge0}} U,
\end{equation}
where $S\ge0$ must hold for all $\bx \in \mathbb{R}^n$. The equality \eqref{eq: bound3} means that for every $U$ that is a valid upper bound on $\ov\Phi^*$, there exists a sequence of auxiliary functions certifying this bound. The challenge is to construct such a $V$ and verify that indeed $S\ge0$.

\subsection{Polynomial dynamical systems and sum-of-squares optimization} \label{sec:PolySOS}

While \eqref{eq: bound3} provides a sharp bound on $\ov{\Phi}^*$, determining the optimal auxiliary function is intractable in general because verifying that a polynomial is non-negative over a subset of $\mathbb{R}^n$ is generally NP-hard \cite{Murty1987} in both the degree and dimension of the polynomial. When the right-hand side of the ODE, $\mathbf{f}$, as well as the quantity $\Phi$ are each polynomial in $\bx$, the problem can be made tractable by restricting the class of auxiliary functions \cite{Goluskin2018}. The first step is to let $V$ be a polynomial of degree no larger than $d$, giving an optimization problem over the finite-dimensional set $\mathbb P_{n,d}$ of degree-$d$ polynomials in $n$ variables. Optimization over this smaller set sometimes gives a bound strictly larger than $\ov\Phi^*$, but in practice the bound converges quickly with increasing $d$. Next, the non-negativity constraint can be relaxed to the requirement that the polynomial $S$ be equal to a sum-of-squares (SOS) of other polynomial terms~\cite{Lasserre2001, Nesterov2000, Parrilo2000}. This stronger constraint ensures $S$ is non-negative over $\mathbb{R}^n$. 
%Let $\mathbb{P}^+_{n,d}$ be the space of non-negative, real-valued, degree-$d$ polynomials in $n$ variables, and $\Sigma_{n,d}$ be the analogous space of SOS polynomials. Clearly the degree $d$ must be even for either set to be nonempty, and $\Sigma_{n,d} \subseteq \mathbb{P}^+_{n,d}$ for all $n,d$. Equivalence of these two sets holds in only a few cases~\cite{Reznick2000}: for polynomials of at most two variables ($n \leq 2$), quadratic polynomials ($d \leq 2$), or quartic polynomials of three variables ($n = 3, d = 4$). In all other cases, there exist non-negative polynomials that do not admit an SOS representation. 
Techniques of polynomial optimization have been applied to prove global stability results by constructing Lyapunov functions~\cite{Parrilo2000}, to identify the region of attraction for ODEs~\cite{Parrilo2003}, and to determine global bounds on time-averaged quantities~\cite{Cherny2014}.
%Possible inclusion: Motzkin polynomial
One benefit of using an SOS constraint is that deciding whether a polynomial is in $\Sigma_{n,d}$ can be performed in polynomial time in both $n$ and $d$. An efficient algorithm for this purpose was developed in~\cite{Powers1998}, based on theoretical work on SOS polynomials by Shor~\cite{Shor1988,Shor1998}. 

If $S$ is assumed to be SOS, then for each fixed degree $d$, the upper bound from the resulting polynomial optimization problem is~\cite{Cherny2014, Fantuzzi2016, Goluskin2018}:
\begin{equation}
\label{eq:Opt}
\ov\Phi^* \leq U^*_d := \min_{V\in\mathbb{P}_{n,d}}~U \quad
\rm{s.t.} \quad S\in\Sigma_{n,d}.
\end{equation}
In this form, the polynomial optimization is computationally tractable for sufficiently small $n$ and $d$. Computations for SOS optimization were performed in this work for ODE models up to $n = 208$ when $d = 2$, and up to $n = 40$ when $d = 4$. If SOS optimization with a chosen degree $d$ does not yield a sharp upper bound, the bound generally improves at the expense of computational cost. 
%The results proved in \cite{Lakshmi2020} [[cite Korda]] guarantee convergence to $\Phi^*$ as $d \to \infty$ when the SOS conditions imply non-negativity on a compact set.
In practice, bounds produced by SOS optimization often converge rapidly as the degree of $V$ is raised~\cite{Fantuzzi2016, Goluskin2018, Olson2020}. Optimization problems with SOS constraints are convex, and can be recast as a type of conic optimization problem called a semidefinite program (SDP). Most modern SOS algorithms utilize the Gram matrix form~\cite{Choi1994}, wherein the polynomial $S$ is represented as:
\begin{equation} \label{eq:Gram0}
S = \mathbf{b}^T \Q \mathbf{b},
\end{equation} 
for some vector of polynomial basis functions $\bf{b}$. 
%The vector $\mathbf{b}$ must be chosen so that $S$ is in the span of the scalar polynomial terms in $\mathbf{b}\mathbf{b}^T$, so that at least one $\Q$ satisfies~\eqref{eq:Gram0}. 
It can be shown that $S \in \Sigma_{n,d}$ if and only if there exists a basis vector $\mathbf{b}$ such that the corresponding Gram matrix is symmetric positive semidefinite, written $\Q \succeq 0$~\cite{Powers1998}. 
%This reduces the task of deciding whether a polynomial is SOS to a convex optimization problem with a matrix constraint (an SDP). 

Reformulating \eqref{eq:Opt} using the constraint \eqref{eq:Gram0} with polynomial basis $\mathbf{b}$ results in the convex optimization problem
\begin{equation}
\label{eq:SDP}
\ov\Phi^* \leq U^*_d := \min_{V\in\mathbb{P}_{n,d}}~U \quad \rm{s.t.} \quad
\begin{array}[t]{l}
S  = \mathbf{b}^\mathsf{T} \Q \mathbf{b} \\
\Q\succeq 0.
\end{array}
\end{equation}
In \eqref{eq:SDP}, affine constraints on the entries of $\Q$ are provided by the equality $S = \mathbf{b}^T \Q \mathbf{b}$, and $\Q \succeq 0$ defines a semidefinite constraint. Together, these constraints characterize \eqref{eq:SDP} as an SDP \cite{Boyd2004}. The polynomial $S$ \eqref{eq:S} includes all of the decision variables---the constant $U$ and the coefficients of the polynomial ansatz $V$---that are determined in solving the SDP. Efficient algorithms exist for solving SDPs, and have been employed in various studies on SOS optimization~\cite{Fantuzzi2016, Goluskin2018, Olson2020}. The result of applying these algorithms to models in the HK hierarchy are presented in \S\ref{sec:UB}. 
% Analytical solutions are possible in cases leading to very small SDPs, as in~\cite{Goluskin2018, Powers1998}.

\section{Upper bounds on N} \label{sec:UB}

We apply the auxiliary function method with the sum-of-squares method introduced in \S\ref{sec:SOS} to determine upper bounds on $N^*$ for models in the HK hierarchy. Suppose we wish to produce bounds on the model with $M_i$ modes, whose $\psi$ and $\theta$ modes correspond to the ordered pairs in the sets $S_\psi$ and $S_\theta$, respectively. In the nomenclature introduced in \S\ref{sec:SOS}, the vector $\bf{x}$ consists of all modes in the HK$M_i$ model, and $\bf{f}$ is the right-hand side of the corresponding system of ODEs, with formulae constructed from \eqref{eq:psiTrunc}--\eqref{eq:thetaTrunc}. Next, let the function $\Phi$ be given by
\begin{equation} \label{eq:phiDef}
\Phi := 1 + \sum_{\mathclap{(0,2n) \in S_\theta}} (2n)\, \theta_{0, 2n},
\end{equation} 
whose time average is the truncated version of the Nusselt number for this model \eqref{eq:N_HK}. We seek $U^*_d$: the minimum upper bound that can be proved with degree $d$ auxiliary functions. In general $U^*_d$ depends on both the chosen model and the parameters $\R$, $\sigma$, and $k$. For bounds computed in this work, the number of modes is fixed in each individual SDP computation. Upper bounds are constructed using the optimization solver MOSEK \cite{mosek} (version 9.0.98). The toolbox YALMIP \cite{Lofberg2004, Lofberg2009} (version 20190425) is used to formulate the optimization problem in the form of \eqref{eq: bound3} and pass the problem to the solver. The upper bounds presented in this work were computed using a 3.0 GHz Xeon processor. Further computational details are presented in \ref{sec:NumProc}.

\subsection{Upper bounds with $k^2 = 1/2$, $\sigma = 10$} \label{sec:UBstd}

Bounds are constructed for HK models with $\sigma = 10$ and $k^2 = 1/2$, using degree four $V$ when $M_i \leq 40$, and using degree two $V$ when $M_i \leq 208$. Sum-of-squares upper bounds are plotted for several models in Figure \ref{fig:boundsHK}. In each case, $U^*_d$ increases with $\R$, growing rapidly at first and eventually leveling off as $\R$ is raised. The primary branch of equilibria saturates the upper bounds when the Rayleigh number is slightly larger than $\R_{11}$. Secondary equilibria emerging from the primary branch can saturate the bounds at larger $\R$ in some cases, such as the HK8 system analyzed in detail in \cite{Olson2020}. Whenever $\R$ is sufficiently small to allow comparison with the PDE, primary equilibria appear to saturate the bound.

The auxiliary function method provides sharp or nearly sharp bounds in the largest range of $\R$ when $M_i = 14$. This model is the first one whose primary equilibria deviate from the form $\eqref{eq:Lmn_simple}$, with seven nonzero modes when $\R > \R_{11}$, resulting in larger heat transport than the Lorenz equilibria. As such, the primary equilibria for HK14 are maximal for a larger range of Rayleigh number than smaller HK models, but are still simple enough to admit sharp bounds with $V$ of low degree.

When $M_i$ is increased for fixed $\R$, the bound $U^*_4$ increases noticeably upon progression to the HK14 and HK32 models. 
%The upper bounds vary by smaller amounts within each shell, and in some cases $U^*_4$ decreases from one model to the next at the same parameter set. 
This appears to be caused primarily by the enhancement of heat transport that occurs when progression to the next model results in the pairing of modes with larger wavenumber with the nonzero modes of the primary branch of equilibria. Additional nonlinear modal pairing in the primary branch occurs only when the total wavenumber $(m+n)$ of the shell is even, since modes of odd total wavenumber do not pair with the primary equilibria. A consequence of this pairing mechanism is that the shear modes are identically zero at the primary equilibria. Similarly, in numerical simulations of the full PDE, zonal flow has been observed to decrease the time-averaged heat transport \cite{Goluskin2014}, and hence the corresponding shear modes must be zero in order to produce optimal heat transport.
\begin{figure}[!tbph]
\centering
\begin{tikzpicture}
\node at (0,0) {\includegraphics[scale=.5, trim=20 0 30 0, clip] {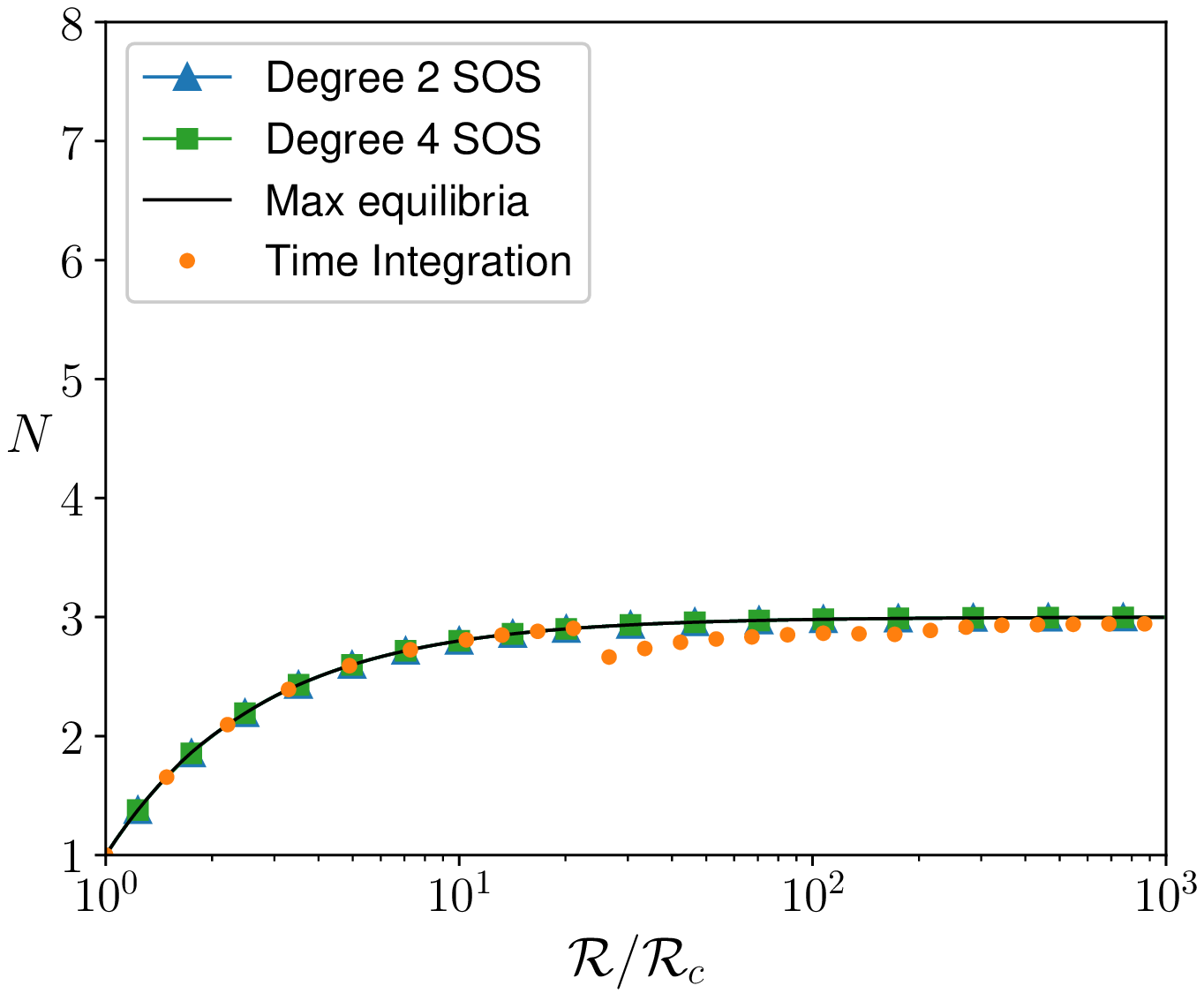}};
\node at (2.6,-2) {\small HK4};
\end{tikzpicture}
\begin{tikzpicture}
\node at (0,0) {\includegraphics[scale=.5, trim=20 0 30 0, clip] {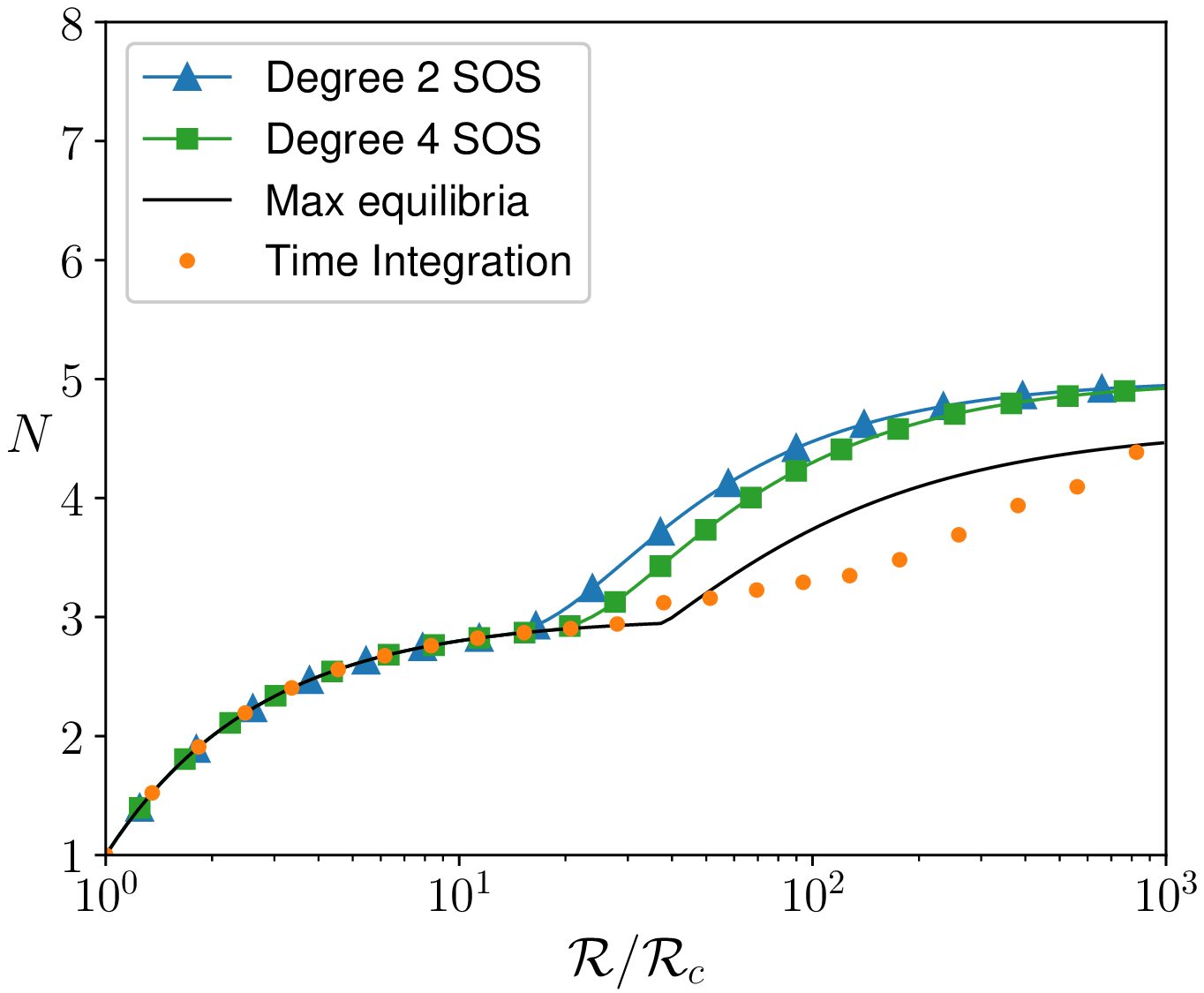}};
\node at (2.6,-2) {\small HK10};
\end{tikzpicture}\\
\begin{tikzpicture} 
\node at (0,0) {\includegraphics[scale=.5, trim=20 0 30 0, clip] {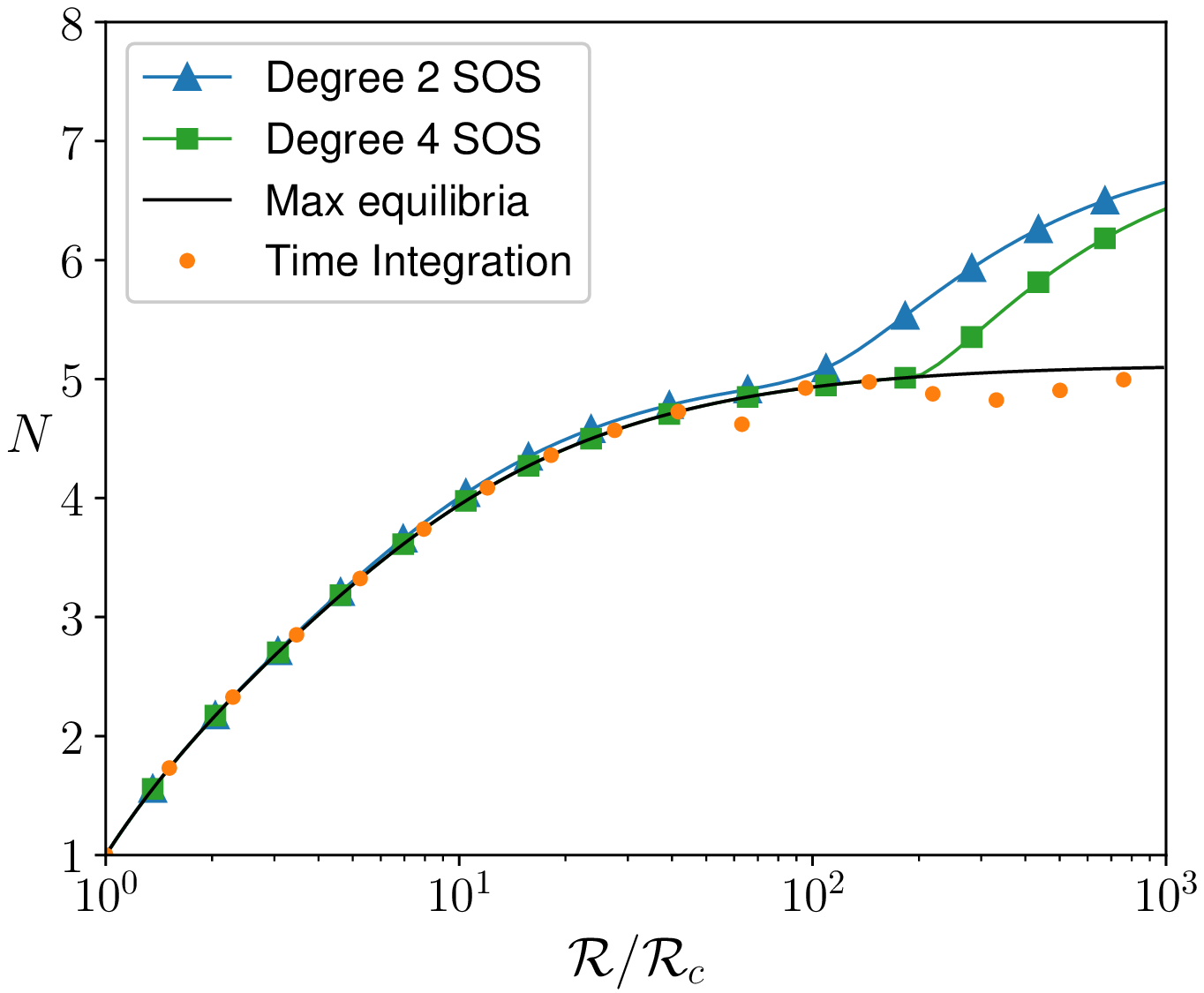}};
\node at (2.6,-2) {\small HK14};
\end{tikzpicture}
\begin{tikzpicture}
\node at (0,0) {\includegraphics[scale=.5, trim=20 0 30 0, clip] {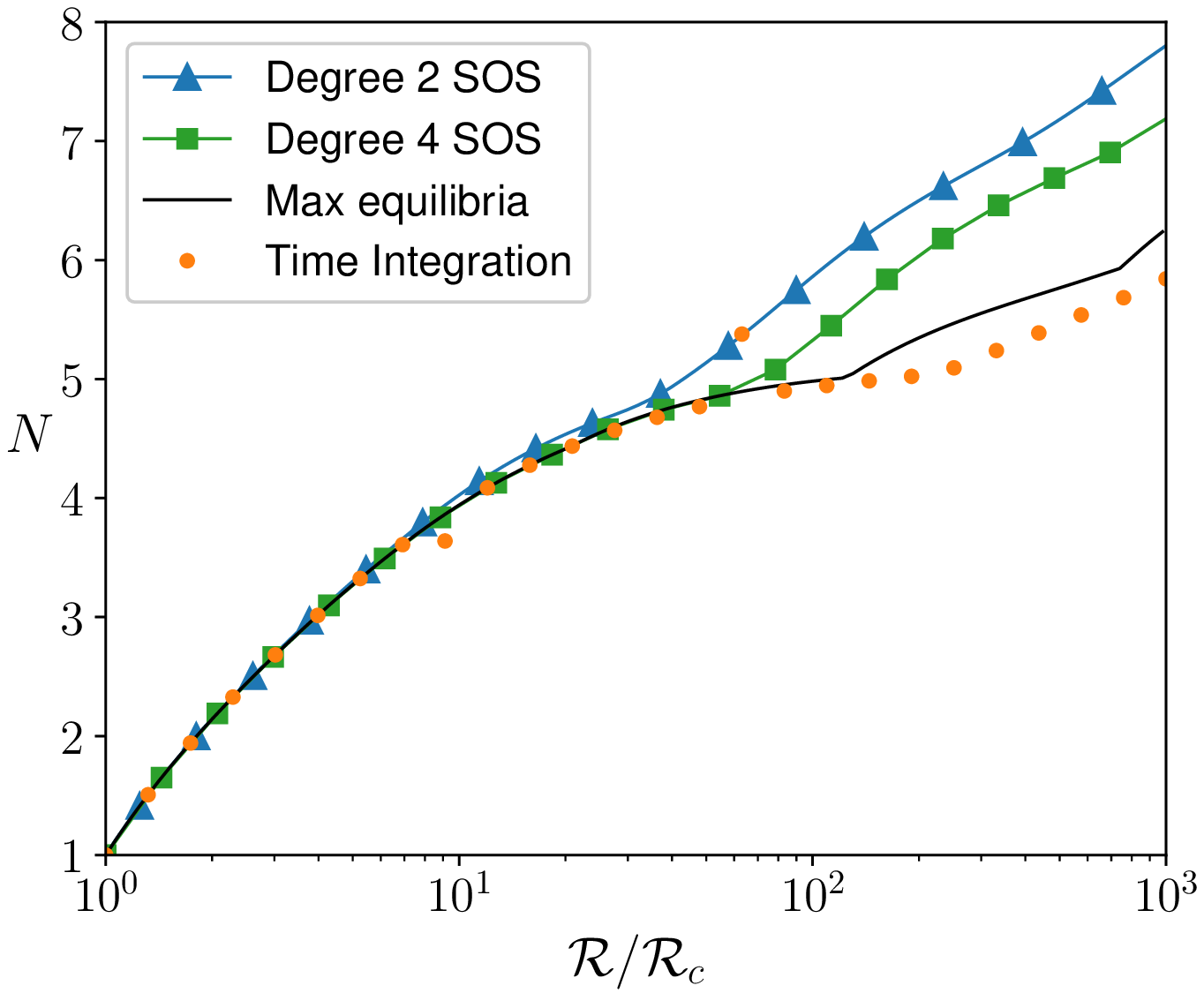}};
\node at (2.6,-2) {\small HK28};
\end{tikzpicture}
\caption{Upper bounds on $N$ obtained with the auxiliary function method for a few selected models in the HK hierarchy with degree two and degree four auxiliary functions. The maximum value of the Nusselt number obtained from numerical continuation and direct integration of the ODEs are shown for comparison.} \label{fig:boundsHK}
\end{figure}

SDPs with degree two $V$ provide more conservative upper bounds, but are less computationally taxing to compute, allowing bounds to be constructed for much larger systems. Such bounds are displayed in Figure \ref{fig:Deg2_large} for several models in the hierarchy up to HK208. The selected models are those that complete their respective shell in the HK hierarchy, such that progressing to the next model would require an additional temperature mode with a horizontal wavenumber of zero. Each new shell adds a term to \eqref{eq:phiDef}, resulting in a corresponding jump in the upper bound on $N^*$ at a given parameter combination. 

Suppose $\R$ is held fixed and the number of modes $M_i$ is increased. If the values of $N$ converge with increasing $M_i$, then the corresponding limit would be an upper bound for all models in the HK hierarchy. We call this upper bound Nu$^*$, since we expect that this would also provide an upper bound on Nu for the full PDE. With $M_i \leq 154$, the upper bound appears to be fully converged to Nu$^*$ up to about $\R = 30 \R_c$. For all $\R$ beyond this point, there exists a gap between the upper bounds for all models from different shells. As the Rayleigh number approaches infinity, the difference in $N^*$ when progressing to the next shell approaches two. For example, in the HK10 model, the large-$\R$ limit of the degree two upper bound is five, while the corresponding limit applied to the HK18 model is seven. This pattern continues, and in the HK154 model, this limit is 27.

% As noted in \S\ref{sec:HKEquil}, when progressing to a shell with even total wavenumber, the primary equilibria are also augmented due to the pairing mechanism discussed in \S\ref{sec:HKEquil}. At each $\R$, the SDP bounds and the primary branch of equilibria provide an enclosure in $N$ around the maximal Nusselt number, $N^*$.

%The bound $U^*_2$ appears to be constant in $\sigma$ for each model in the hierarchy, just as observed for the HK8 model in \S \ref{sec:HK8QuadBound}. For each model with $M_i \leq 58$, $U^*_2$ varied by less than 1\% when $\sigma \in [0.01,100]$ and other parameters are held fixed. As a result, the degree two upper bounds are generally not sharp among solutions with fixed $\sigma$. When $\sigma = 10$ and $k^2 = 1/2$, the degree two bounds are sharp for only a small interval of Rayleigh number in all models with $M_i > 4$.

\begin{figure}[!htb]
\centering
\includegraphics[scale=.5]{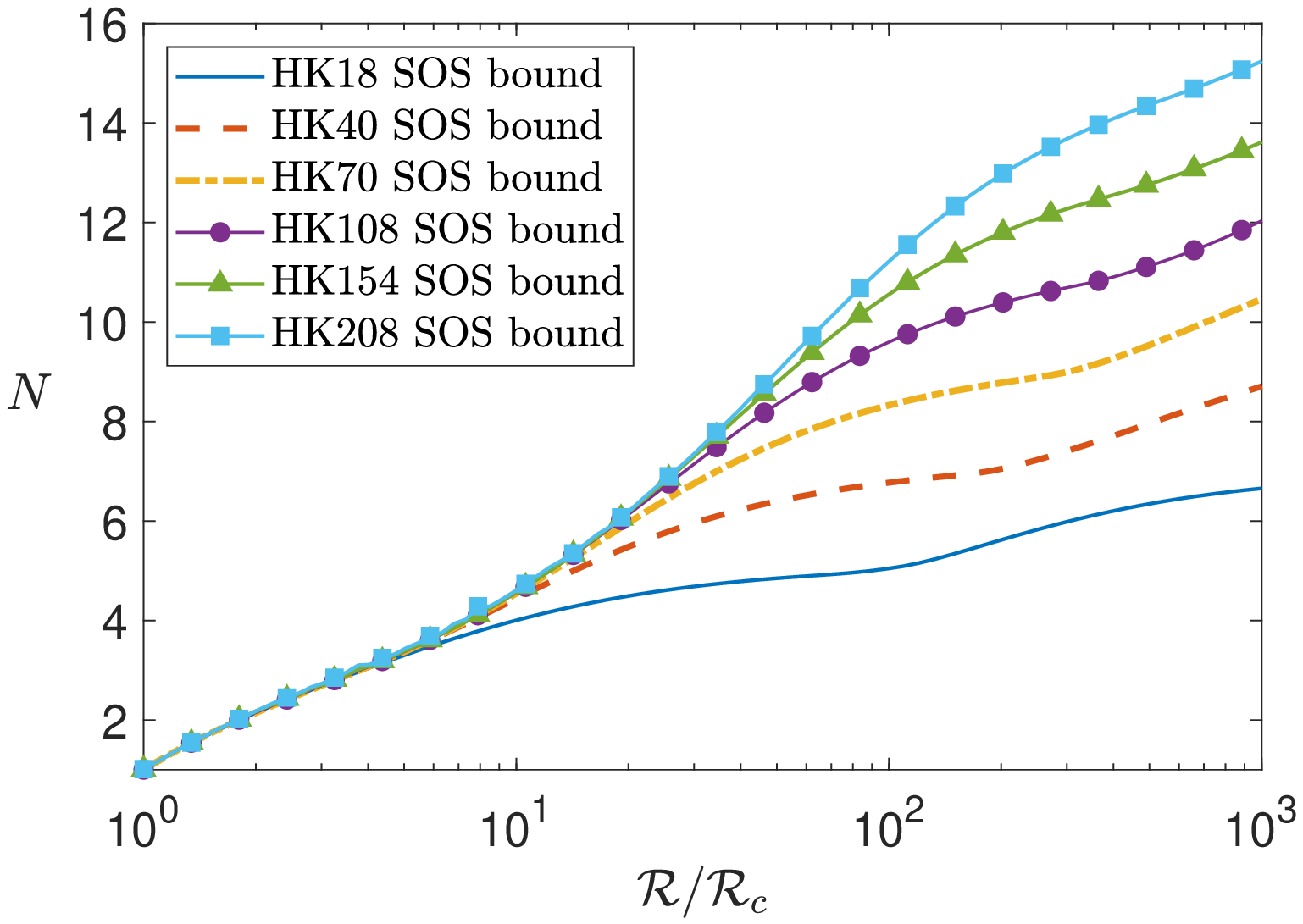}
\caption{Upper bounds on $N$ computed with degree two auxiliary functions for several models with $M_i \leq 208$ with parameter values $k^2 = 1/2$, $\sigma = 10$. The value of $N^*$ appears converged in the limit of increasing $M_i$ when $\R \leq 30 \R_c$.}
\label{fig:Deg2_large}
\end{figure}

\subsection{Upper bounds at optimal $k$}

In the preceding discussion, we analyzed the upper bounds at fixed $k$, namely the standard case when $k^2 = 1/2$. Here we instead maximize the value of $N$ over all possible $k$, resulting in an upper bound that holds for all domain aspect ratios. This is of particular interest because the full PDE for Rayleigh--B\'enard convection admits steady solutions of all horizontal periods when the Rayleigh number is sufficiently large. Specifically, we fix $\sigma$ and seek the solution $N^*_{k^*}$ to the optimization problem
\begin{equation}
N^*_{k^*}(\R,\sigma) := \sup_{k > 0} N^*(k,\R,\sigma),
\end{equation}
where $k^*$ is the value of $k$ that maximizes the upper bound. To estimate $N^*_{k^*}$, we use the MATLAB bounded optimization tool \texttt{fminbnd} along with the SDP procedure used for the other bounds computed in this work.

The results of performing this optimization on the HK18 model are displayed in Figure \ref{fig:NmaxK}. Upper bounds with $k^2 = 1/2$ are maximal at onset, and remain nearly optimal until $\R \approx 100 \,\R_{c}$. The optimizer $k^*$ is larger than $\sqrt{1/2}$ whenever $\R > \R_{c}$, and increases gradually with $\R$, reaching $(k^*)^2 \approx 0.656$ when $\R \approx 95\, \R_c$. In this regime, the upper bound is saturated by the primary equilibria with $k = k^*$. The maximum value of $N$ along the primary equilibria can be estimated by numerically continuing the equilibrium branch over $k$ at fixed $\R$; doing so appears to yield the same values of $k^*$ as the upper bounds when $\R_c \leq \R \leq 95 \,\R_c$. For slightly larger $\R$, the upper bound $N^*_{k^*}$ is saturated by a branch of equilibria that bifurcates from the primary branch, and the optimizer immediately jumps to $(k^*)^2 \approx 2.84$. The overall behavior of $N^*_{k^*}$ is similar for other models in the HK hierarchy with $M_i \leq 28$, but computation time increases sharply for larger models in the hierarchy.

Upper bounds with degree 4 $V$ were also constructed for various selected values of $k^2$ for models in the hierarchy up to HK28. Bounds at fixed $k$ are compared with the bounds at maximal $k$ in Figure \ref{fig:NmaxK}. Changing the value of $k$ primarily affects the $\R$ value where equilibria first bifurcate from the zero state, and slightly alters the shape of the upper bound curve when plotted against $\R$. Among the values of $k$ computed, the largest value of $U^*_d$ is attained with $k^2 = 1/2$ for $\R \lesssim 100 \,\R_{c}$, and with $k^2 = 2$ for larger $\R$. This agrees with the computed optimal values of $k^*$ found in the preceding discussion.
% The value of the Rayleigh number $\R_{11}$ reaches its minimum of $27/4$ when $k^2 = 1/2$. This Rayleigh number is also attained by the bifurcation at $\R_{m1}$ for any integer $m > 0$ at $k^2 = 1/(2m^2)$. The steady states that bifurcate at $\R_{m1}$ initially consist of a series of $m$ side-by-side convection cells that each span the domain vertically. These states behave similarly to the primary branch, but require a much wider domain (smaller $k$) to achieve the same heat transport. Changing $k$ also affects the locations of secondary bifurcations and other features, but does not significantly alter the shape of the upper bound curve when plotted against $\R$. 

\begin{figure}[!tp]
\centering
\hspace{1cm}\includegraphics[scale=.5]{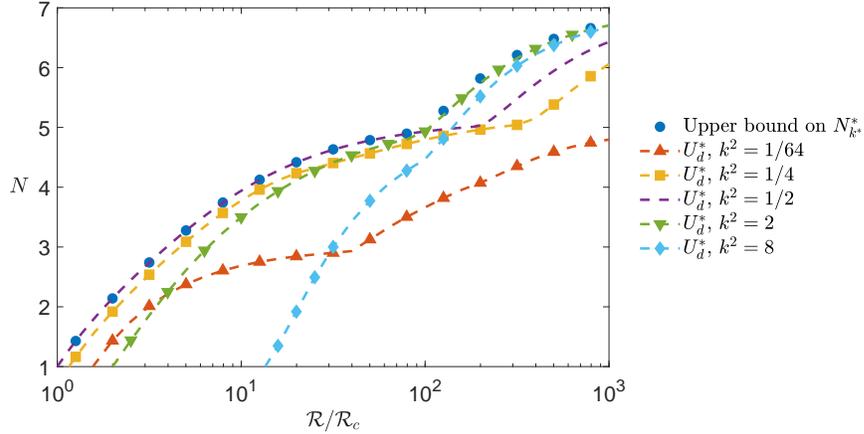}
\caption{Sum-of-squares upper bounds on the HK18 model maximized over $k$ compared with the upper bounds at selected fixed values of $k$. Bounds are computed with degree 4 $V$ and $\sigma = 10$.}
\label{fig:NmaxK}
\end{figure}

%\section{Outline}
%\input{MoreLOMS}

\section{Conclusions}
\label{sec:Con}

The models in the HK hierarchy are reduced-order models (ROMs) of Rayleigh--B\'enard convection that obey energy, temperature, and vorticity balance laws derived from the Boussinesq equations. Although previous works focused on conservation laws that hold in the dissipationless limit, we find that the same model construction criteria also satisfy analogous balance laws derived in the presence of dissipation. These models are expressed in a form amenable to procedural construction of the ODEs, and the various models analyzed in this work are constructed in this way. Each model in our hierarchy of distinguished models includes at least one shear mode of the form $\psi_{0n}$. Solutions where the shear modes are active are reminiscent of the zonal flows observed in various studies of Rayleigh--B\'enard convection in periodic domains. 

Various models in the HK hierarchy are analyzed by examining their bifurcation structure and computing upper bounds on the time-averaged heat transport using sum-of-squares optimization. We observe that $N$ is most often maximized by steady states, especially when the Rayleigh number is sufficiently small to allow quantitative comparison to the PDE. The primary branch of equilibria---the steady states that emerge as the first instability of the static state when $\R$ is raised---maximize heat transport for a range of $\R$ at the onset of convection. The heat transport of the primary states is enhanced when additional modes are included in the reduced-order model that pair with the nonzero variables of the primary branch. The first such enhancement occurs in the HK14 model, and additional jumps in the total heat transport typically occur when the HK hierarchy begins a ``shell" in the hierarchy that has even total wavenumber $(m+n)$. Other distinguished modal hierarchies could potentially be constructed that take advantage of this phenomenon to achieve greater heat transport with a similar number of modes.

We also observe that the states providing optimal heat transport do not include shear modes except when the Rayleigh number is well beyond the point where the reduced models closely approximate the heat transport of the full Boussinesq equations. In fact, we observe that the shear modes are identically zero along all equilibria that bifurcate from the zero state (the $L_{mn}$ equilibria and higher-dimensional analogues). Steady states exhibiting shear flow---analogues of the $TC$ equilibria studied in the HK8 model---were only observed to emerge as bifurcations from equilibria other than the zero state. The study of reduced-order models without shear is another possible direction for future research. Such models arise naturally if one considers a fully enclosed fluid domain, rather than imposing horizontal periodicity. Another possible direction of future work is to consider other types of boundary conditions, such as no-slip or fixed-flux conditions along the walls. 

\appendix

\section{Conservation properties} \label{sec:ConsLaws}

In this appendix, we derive the criteria that were used to construct models in the HK hierarchy from conservation laws of the Boussinesq equations~\eqref{eq:BE_psi}--\eqref{eq:BE_theta}. First we examine the restrictions on mode selection imposed by the conservation of energy, temperature and vorticity in the dissipationless limit $\nu, \kappa \to 0$ in \ref{sec:ConsLawsDLess}. Conservation laws in the ideal limit were imposed in various studies~\cite{Treve1982,Thiffeault1995,Hermiz1995,Gluhovsky2002}, and were applied to construct the HK8 model~\cite{Howard86,Goluskin2013,Olson2020}. In \ref{sec:DissipLaw} we show that these restrictions do not change if one considers the analogous integral balance laws derived from the full Boussinesq equations.
%We find that each of the conservation properties described below hold if the modes are selected according to the following:
%\begin{enumerate}
%\item Given a pair $(m,n)$ such that $\psi_{mn} \in S_\psi$ and $\theta_{mn} \in S_\theta$, then $\theta_{0,2n} \in S_\theta$. \cite{Thiffeault1995}
%\item Given two pairs $(p,q), (p,s)$ such that $\psi_{pq}, \psi_{ps} \in S_\psi$, the shear mode $\psi_{0,q+s} \in S_\psi$ if and only if $\psi_{0,|q-s|} \in S_\psi$. 
%\end{enumerate}

\subsection{Conservation laws in the dissipationless limit} \label{sec:ConsLawsDLess}

The dimensionless form of the Boussinesq equations~\eqref{eq:BE_psi}--\eqref{eq:BE_theta} is not amenable to taking the dissipationless limit, since some variable scalings depend on $\nu$ and $\kappa$. Instead, we nondimensionalize the equations by scaling length by $d$, time by $\sqrt{d/g\alpha \Delta T},$ the stream function by $\sqrt{g \alpha \Delta T d^3}$ and temperature by $\Delta T$ as in \cite{Thiffeault1995}. As $\nu, \kappa \to 0$, the governing equations become
\begin{align}
	\partial_t \nabla^2 \psi - \{\psi, \nabla^2 \psi \} &= \partial_x \theta, \label{eq:BE_diss_p}\\
    \partial_t \theta - \{\psi, \theta\} &= \partial_x \psi. \label{eq:BE_diss_t}
\end{align}
%The dissipationless version of the reduced equations is similar to~\eqref{eq:psiTrunc}--\eqref{eq:thetaTrunc}, except that the first term on the right-hand side of each equation is removed, and factors of $\sigma$ and $\R$ are absent. 
The equations \eqref{eq:BE_diss_p}--\eqref{eq:BE_diss_t} admit a number of conserved quantities \cite{Thiffeault1995,Gluhovsky2002}. One such quantity is the energy, $E = K + U$, where $K$ and $U$ satisfy
\begin{align}
K &= \frac{1}{2} \blangle \left\vert\nabla \psi \right\vert^2 \brangle, \\
U &= \blangle z \theta \brangle,
\end{align}
with the spatial average $\langle \cdot \rangle$ defined by \eqref{eq:spaceAvg}. To verify that this is conserved by the dissipationless Boussinesq equations, multiply \eqref{eq:BE_diss_p} by $\psi$ and average over the domain, imposing boundary conditions and integrating by parts when necessary, to obtain
\begin{equation}
\frac{1}{2} \partial_t \blangle \left\vert \nabla \psi \right\vert^2 \brangle = - \blangle \psi \partial_x \theta \brangle.
\end{equation}
Likewise, multiplying \eqref{eq:BE_diss_t} by $z$ and taking the volume average produces
\begin{align}
\partial_t \blangle z \theta \brangle  = \blangle \psi \partial_x \theta \brangle.
\end{align}
Adding these two expressions provides the desired result $\partial_t E = 0.$ 
%\begin{align}
%\mbox{Energy:} \quad & \quad \partial_t \left[ \frac{1}{2} \langle (\nabla \psi )^2 \rangle + \langle z \theta \rangle \right] =  \, 0,\label{eq:ECons} \\
%\mbox{Temperature:} \quad  & \quad  \partial_t \langle \theta \rangle =  \, 0, \label{eq:TCons}\\[6pt]
%\mbox{Vorticity:} \quad & \quad  \partial_t \langle \nabla^2 \psi \rangle =  \, 0.\label{eq:VCons}
%\end{align}
%We call reduced order models ``energy conserving" if $E$ is conserved in the dissipationless limit, even though they do not conserve energy in the presence of dissipation, and likewise for temperature and vorticity conservation.

Let $S_\psi$ and $S_\theta$ be the sets of Fourier mode pairs in a chosen truncated model, as in \S\ref{sec:constructModel}. Thiffeault and Horton~\cite{Thiffeault1995,Thiffeault1996} found that Galerkin-truncated models of Rayleigh's PDE conserve energy in the dissipationless limit if and only if the modes are chosen such that
\begin{crit}[Energy balance] \label{rule:Econs1}
If $(m,n) \in S_\psi \cap S_\theta,$ then $(0,2n) \in S_\theta$.
\end{crit}

The Lorenz equations have $S_\psi \cap S_\theta = \{(1,1)\}$ and $(0,2) \in S_\theta$, satisfying Criterion \ref{rule:Econs1}. On the other hand, the model of Howard and Krishnamurti includes $(1,2) \in S_\psi \cap S_\theta$ but is missing $(0,4) \in S_\theta$; adding $\theta_{04}$ restores the energy balance \cite{Thiffeault1995}. There are a few important consequences of selecting reduced models that satisfy the energy criterion. All trajectories of such models remain bounded, even in the presence of dissipation~\cite{Thiffeault1995}. This is significant because unbounded trajectories have been observed for certain ODE models \cite{Howard86}, marking a significant divergence from the physics of Rayleigh--B\'enard convection. Also, the two definitions of $N$, \eqref{eq:N_gen} and \eqref{eq:N2_gen}, are equivalent along all solutions of models satisfying the energy criterion \cite{Thiffeault1995}. Finally, models satisfying criterion \ref{rule:Econs1} conserve total temperature in the dissipationless limit, that is,
\begin{equation}
\partial_t \blangle \theta \brangle  = 0,
\end{equation} 
derived by taking the volume average of \eqref{eq:BE_diss_t}.

The Boussinesq equations also satisfy conservation of the integral of vorticity \cite{Gluhovsky2002} in the dissipationless limit:
\begin{equation}
\partial_t \Blangle \nabla^2 \psi \Brangle = 0,
\end{equation} 
determined by taking the spatial average of \eqref{eq:BE_diss_p}. Next we identify the criterion for reduced models to satisfy vorticity conservation in the dissipationless limit. Let the function $f_\psi(x) = \sin(x)$ when $m+n$ is even or $\cos(x)$ when $m+n$ is odd according to the convention established in \S\ref{sec:simplifyModel}. Projecting the integral of vorticity onto $S_\psi$ yields
\begin{align} \label{eq:VortCons1}
\blangle \nabla^2 \psi \brangle &= \left \langle \, -\sum_{\mathclap{(m,n) \in S_\psi}} \rho_{mn} \psi_{mn} \, f_\psi(mkx) \sin(nz) \right \rangle \\
&= -\sum_{\mathclap{(m,n) \in S_\psi}} \rho_{mn} \psi_{mn} \, \Blangle f_\psi(mkx) \sin(nz) \Brangle.
\end{align}
The volume average vanishes for each term with $m \neq 0$ and for all even $n$. For these terms, $\rho_{0n} = n^2$, so after integrating, the volume-averaged vorticity reduces to
\begin{align}
\blangle \nabla^2 \psi \brangle &= -\tfrac{1}{\pi} \sum_{\mathclap{(0,n) \in S_\psi}} \tfrac{1 - \cos(\pi n)}{n} n^2 \,\psi_{0n}\\
&= -\tfrac{1}{\pi} \sum_{\mathclap{\substack{(0,n) \in S_\psi \\ n \, \text{odd}}}} 2n \, \psi_{0n}. \label{eq:vortDiss}
\end{align}
Expressions for $\dot{\psi}_{0n}$ can be computed by adapting the general ROM equations \eqref{eq:psiTrunc} for the dissipationless scaling with $m=0$, resulting in:
\begin{equation}
\dot{\psi}_{0n} = \frac{k}{4n^2} \sum_{P_\psi[(0,n)]} (-1)^{p+s} p (s + B_{qns} q)(q^2 - s^2) \psi_{pq} \psi_{ps},
\end{equation}
where as in \S\ref{sec:constructModel}, the set $P_\psi[(m,n)]$ \eqref{eq:psiPairs} contains the wavenumber pairs of modes for each term contributing to the quadratic part of the right-hand side of the $\psi_{mn}$ ODE. The set $P_\psi[(0,n)]$ takes the form $\{(p,q),(p,s)\}$, where $p \neq 0$ and either $n = q + s$ or $n = \lvert q - s \rvert$, and correspond to terms proportional to $\psi_{pq} \, \psi_{ps}$. Note that the $\psi$ modes in the nonlinear term have the same horizontal mode since $m = 0$, and that $q > s$ due to the ordering placed on $P_\psi$ to avoid duplicate terms.

Taking the time derivative of \eqref{eq:vortDiss}, we obtain
\begin{align} 
\partial_t \langle \nabla^2 \psi \rangle &= -\tfrac{1}{\pi} \sum_{\mathclap{(0,n) \in S_\psi}} 2n \, \dot{\psi}_{0 n} \\
&=  -\tfrac{1}{\pi} \sum_{\mathclap{(0,n) \in S_\psi}} \hspace{2pt} \tfrac{k}{2n} \hspace{-2pt} \sum_{P_\psi[(0,n)]} (-1)^{p+s} p \left(s + B_{qns} \, q \right) (q^2 - s^2) \psi_{pq} \psi_{ps}. \label{eq:VortCons2}
\end{align}
Now, consider terms of the form $\psi_{pq} \, \psi_{ps}$ for fixed $p, q, s$. Such terms can appear at most two times in the sum \eqref{eq:VortCons2}, and this occurs in the terms generated from the $\psi_{0, q+s}$ and $\psi_{0, |q - s|}$ equations. Combining these terms results in
\begin{align}
-\frac{k}{2\pi} (-1)^{p+s}\, p \,(q^2 - s^2)\,\left[\tfrac{1}{(q + s)} (s + q) + \tfrac{1}{ q - s}( s - q )\right] \psi_{pq} \psi_{ps} = 0,
\end{align}
so long as all relevant modes are included in the truncated model. If one of $\psi_{0,q+s}$ or $\psi_{0,|q-s|}$ are included, but not the other, at least one term in \eqref{eq:VortCons2} remains.  Conservation of vorticity therefore imposes the following criterion on mode selection:
\begin{crit}[Vorticity balance] \label{rule:Vcons1}
If $(p,q) \in S_\psi$ and $(p,s) \in S_\psi$, then $(0,\lvert q-s\rvert ) \in S_\psi$ if and only if $(0,q+s) \in S_\psi$.
\end{crit}
The six-ODE and seven-ODE models discussed above include $(1,1), (1,2) \in S_\psi$ as well as the shear mode $(0,1) \in S_\psi$. Hence vorticity conservation is enforced by adding $(0,3) \in S_\psi$ \cite{Hermiz1995}.

\subsection{Integral balances in the presence of dissipation} \label{sec:DissipLaw}
For each of the conservation laws of the dissipationless Boussinesq equations, there exists an analogous integral balance derived from the full PDE \eqref{eq:BE_psi}--\eqref{eq:BE_theta}. The resulting energy, temperature and vorticity balance laws are
\begin{align}
\partial_t \left[ \tfrac{1}{2} \left\langle \lvert\nabla \psi \rvert^2 \right\rangle + \sigma \R \langle z \theta \rangle \right] &=  \, \sigma \R \left\langle z \nabla^2 \theta \right\rangle - \sigma \left\langle (\nabla^2 \psi )^2 \right\rangle, \label{eq:EConsD} \\[6pt]
\partial_t \langle \theta \rangle &=  \, \left\langle \nabla^2 \theta \right\rangle, \label{eq:TConsD}\\[6pt]
\partial_t \left\langle \nabla^2 \psi \right\rangle &= \, \sigma \left\langle \nabla^4 \psi \right\rangle .\label{eq:VConsD}
\end{align}
Here we show that if a reduced-order model obeys conservation of energy, temperature and vorticity in the dissipationless limit, it also satisfies the corresponding balance equations for the PDE with dissipation.

If the truncated Fourier expansions for $\psi$ and $\theta$ are substituted into \eqref{eq:EConsD}, orthogonality reduces the left-hand side of the energy balance to
\begin{align}
\partial_t E &= \sum_{\mathclap{\substack{(m,n) \in S_\psi\\m\neq 0}}} \tfrac{1}{4} \rho_{mn} \psi_{mn} \dot{\psi}_{mn} + \sum_{\mathclap{(0,n) \in S_\psi}} \tfrac{1}{2} n^2 \psi_{0n} \dot{\psi}_{0n}-\sigma \R \sum_{\mathclap{(0,n) \in S_\theta}} \tfrac{1}{2n} \dot{\theta}_{0,2n}.
%\sigma \left\langle z \nabla^2 \theta \right\rangle &= \sigma \sum_{\mathclap{(0,n) \in S_\theta}} \tfrac{n^2}{2n} \theta_{0,2n},\\
%\sigma \left\langle (\nabla^2 \psi)^2 \right \rangle &= \sigma \sum_{\mathclap{\substack{(m,n) \in S_\psi \\ m \neq 0}}} \tfrac{1}{4} \rho_{mn}^2 \psi_{mn}^2 %+ \sigma \sum_{\mathclap{(0,n) \in S_\psi}} \tfrac{1}{2} n^4 \psi_{0n}^2.
\end{align}
The proof in \cite{Thiffeault1995} shows that the above expression vanishes in the absence of dissipation provided the modes are selected as specified in \ref{sec:ConsLawsDLess}. Here, similar cancellation occurs, leaving only the terms resulting from the dissipative terms of \eqref{eq:BE_psi}--\eqref{eq:BE_theta}:
\begin{align}
\partial_t E = - \sum_{\mathclap{\substack{(m,n) \in S_\psi\\m\neq 0}}} \tfrac{1}{4} \rho_{mn}^2 \sigma \psi_{mn}^2 - \sum_{\mathclap{(0,n) \in S_\psi}} \tfrac{1}{2} n^4 \sigma \psi_{0n}^2  + \sigma \R \sum_{\mathclap{(0,n) \in S_\theta}} \tfrac{n}{2} \theta_{0,2n}.
\end{align}
This can also be proved in a similar manner as in the proof of the vorticity conservation law in \ref{sec:ConsLawsDLess}. The projection of the right-hand side of \eqref{eq:EConsD} onto any Fourier-truncated $\psi$ and $\theta$ is identical to the above expression, so the general energy balance holds under the exact same conditions as its dissipationless version. The truncated version of the integral balance for temperature \eqref{eq:TConsD} is trivial to prove since the average temperature vanishes for all Fourier modes in the expansion for $\theta$. The vorticity balance \eqref{eq:VConsD} is also easy to show. First, following the proof in \ref{sec:ConsLawsDLess}, the spatially averaged vorticity in the presence of dissipation is:
\begin{equation}
\partial_t \blangle \nabla^2 \psi \brangle = \frac{1}{\pi} \sum_{(0,n) \in S_\psi} 2n (\sigma n^2 \psi_{0n}).
\end{equation} 
Similarly, the right-hand side of \eqref{eq:VConsD} simplifies to 
\begin{equation}
\sigma \blangle \nabla^4 \psi \brangle = \sigma \frac{1}{\pi} \sum_{(0,n) \in S_\psi} n^4 \frac{2}{n} \psi_{0n}.
\end{equation}
Therefore, the vorticity balance law holds under the same criterion as its dissipationless version.

In this appendix, we established criteria on the mode selection for reduced-order models of Rayleigh--B\'enard convection such that the resulting models obey truncated versions of energy, temperature and vorticity balance laws derived from the PDE. We call low-order models that obey each of the balance laws above \emph{distinguished models}. Each model in the HK hierarchy that is derived in \S\ref{sec:HK} is a distinguished model in this sense.

\section{Numerical procedure}
\label{sec:NumProc}

%\subsection{Scaling and monomial reduction} \label{sec:MonRed}

In this appendix, we discuss the numerical procedure used to compute upper bounds in \S\ref{sec:UB} according to the sum-of-squares optimization process. The number of terms in the general ansatz for the auxiliary function $V \in \mathbb{P}_{n,d}$ grows rapidly in both the dimension $n$ of the ODE and the maximum degree $d$ of the monomials in the ansatz. Increasing either $n$ or $d$ results in significant increases in computational cost and poor numerical conditioning in all but the smallest SOS problems. These issues can be remedied in part by taking advantage of the structure of the ODEs to reduce the number of monomials in the auxiliary function ansatz. Numerical conditioning can be further improved by scaling the phase space variables in the governing ODE system. Monomial reduction for SDP computations in this work was automated using Python's symbolic manipulation package \texttt{sympy}. The Python scripts are posted on GitHub\footnote{GitHub repository: https://github.com/PeriodicROM/ReduceMonomsRBC}. In this section, we detail how monomial reduction and scaling were accomplished in our numerical procedure.

\begin{table}[ht]
\centering
\ra{1.1}
\caption{Number of monomials up to degree 4 compared with the number in the reduced form of the ansatz for the auxiliary function $V$ for selected models in the HK hierarchy. Before reduction, there are $\binom{M_i+4}{4}$ monomials in the degree 4 ansatz, where $M_i$ is the dimension of the model. The time required to solve the SDP is reported for both the reduced and unreduced problems (the unreduced problem was not solved for $M_i > 28$ due to memory constraints). As a rule of thumb, the memory and time requirements scale roughly as $O(n^3)$ when the corresponding Gram matrix is of dimension $n$. Computation time for the HK26 model was comparatively slow because it has only one sign symmetry, while all others displayed here admit 3 symmetries.} \label{tab:MonomsHK} 
\begin{tabular}{cccccc}
\toprule
 & \multicolumn{2}{c}{Unreduced} && \multicolumn{2}{c}{Reduced}\\
 \cmidrule{2-3} \cmidrule{5-6}
Model & Monomials & Time (s) && Monomials & Time (s)\\
\midrule
HK8 & 495 & 2 && 89 & 0.5 \\%600MB subtract 600 for matlab?
HK10 & 1001 & 5 && 159 & 0.7  \\%600MB
HK14 & 3060 & 60 && 382 & 3 \\ %800MB
HK16 & 4845 & 230 && 448 & 7\\ %870MB
HK18 & 7315 & 450 && 575 & 9 \\ %2G
HK22 & 14950 & - && 978 & 50 \\ %7.7GB
HK24 & 20475 & - && 1190 & 100 \\%14 GB -> 3 GB
HK26 & 27405 & - && 1434 & 815 \\% 25 GB -> 6 GB (only 1 symm?)
HK28 & 35960 & - && 1698 & 250\\% 44 GB -> 3 GB
\bottomrule
%HK32 7 GB
\end{tabular}
\end{table}

Symmetry conditions can be used to improve numerical performance of SDP computations. Suppose that both $\Phi$ and the ODE are invariant under a symmetry given by the linear transformation $\Lambda$, so that $\Phi(\Lambda \bx) = \Phi(\bx)$ and $\mathbf{f}(\Lambda \bx) = \Lambda \mathbf{f}(\bx).$ Then any bound proved using the auxiliary function method can be proved with symmetric $V$, so that $V(\Lambda \bx) = V(\bx)$ \cite{Goluskin2019,Lakshmi2020}. Symmetry reductions are convenient to implement for sign-symmetries of the variables in $\bx$, where $\Lambda$ is a diagonal matrix such that each diagonal entry is $\pm 1$. Given $\Phi$ and $\mathbf{f}$, let monomials be represented in vector form by multi-indices $\boldsymbol{\alpha} \in \mathbb{Z}^n$, where
\begin{equation}
\bx^{\boldsymbol{\alpha}} = x_1^{\alpha_1} x_2^{\alpha_2} \cdots x_n^{\alpha_n}.
\end{equation}
Next, let a symmetry $\Lambda$ of the form described above be represented by $\mathbf{s} \in \mathbb{Z}_2^n$, with $s_i = 1$ if $x_i$ is reflected under $\Lambda$, and $s_i = 0$ otherwise. For example, the vector $(1,1,0)$ corresponds to the symmetry $(x,y,z) \mapsto (-x,-y,z)$. Within this framework, a monomial is invariant under the symmetry $\Lambda$ if and only if $\mathbf{s} \cdot \boldsymbol{\alpha} \equiv 0$ (mod 2). This provides a computationally efficient way to determine if a candidate monomial in the general ansatz for $V$ is symmetric. The set of symmetries for a given problem can be determined in much the same way. First, construct the matrix $\mathbf{A}$ whose rows are the multi-indices of each term in the polynomials $x_i f_i(\bx)$ and $\Phi(\bx)$. Any symmetry must then satisfy $\mathbf{A} \mathbf{s} \equiv 0$ (mod 2). Keeping only the symmetric terms in the ansatz for $V$ reduces time and memory constraints since fewer coefficients must be determined. It also yields Gram matrices that are readily block diagonalizable, further improving numerical performance of the SDP algorithm.

The sum-of-squares constraint on the polynomial $S$ \eqref{eq:S} implies that its highest-degree terms must be of even degree. This fact can be exploited to simplify the ansatz for $V$. For all models in this paper, the function $\mathbf{f}$ has degree 2, so in general the polynomial $S = U - \mathbf{f} \cdot \nabla V$ has degree $d+1$ for $V$ of degree $d$. When $d$ is even, $S$ admits a sum-of-squares factorization only if the highest-degree terms of $\mathbf{f} \cdot \nabla V$ cancel. Any terms in $V$ that are incompatible with this highest-degree cancellation condition can be discarded without affecting the upper bound. The coefficients of the discarded terms must be zero in any $V$ satisfying the SOS constraint. We apply the highest-degree cancellation condition in all SDP computations in this work, resulting in a reduced monomial basis for $V$. Doing so improves numerical conditioning and computational complexity of the SDP algorithm.

The numerical conditioning of SDP optimization can be significantly improved by scaling the state variables of the ODE. For all SDP computations performed in this work, the state variables were scaled so that all relevant trajectories are contained roughly within the region $[-1,1]^n$. This heuristic has been employed for SDP computations in other works, and in such cases doing so improved the numerical conditioning of the SDP~\cite{Goluskin2018,Henrion2014}.

When the ODEs are expressed in the form \eqref{eq:psiTrunc}--\eqref{eq:thetaTrunc}, the variable scalings required by the above criteria often change significantly as the Rayleigh number increases. For computational purposes this can be remedied either by constructing scaling factors that change with $\R$, or by making a change of variables in the governing equations. The latter is the approach taken in this work, and is accomplished by letting $\psi'$ and $t'$ be given by
\begin{equation} \label{eq:scaledVars}
\psi' := \R^{-1/2} \psi, \qquad t' := \R^{1/2} t.
\end{equation}
This change of variables is motivated by the fact that $\psi \sim O(\R^{1/2})$ in the original expressions for the $L_{mn}$ equilibria \eqref{eq:Lmn_simple}. Under the transformation \eqref{eq:scaledVars}, the expressions governing the truncated models take the form
\begin{align}
    \dot{\psi'}_{mn} &= -\sigma \R^{-1/2} \rho_{mn} \psi'_{mn} + (-1)^{m+n} \sigma \frac{mk}{\rho_{mn}} \theta_{mn} + \frac{k}{\rho_{mn}} Q_{mn}^\psi, \label{eq:psiTruncNew} \\
     \dot{\theta}_{mn} &= -\rho_{mn} \R^{-1/2} \theta_{mn} + (-1)^{m+n} (mk) \psi'_{mn} + k Q_{mn}^\theta, \label{eq:thetaTruncNew}.
\end{align}
Many of the expressions defined in \S\ref{sec:constructModel} are unaffected by this transformation, including the quadratic terms $Q_{mn}^\psi$ \eqref{eq:psiTruncQuad} and $Q_{mn}^\theta$ \eqref{eq:thetaTruncQuad}, and both versions of $N$ defined in \eqref{eq:N_gen}--\eqref{eq:N2_gen}.

The change of variables \eqref{eq:scaledVars} appears to provide the proper scaling to yield dynamics roughly within $[-1,1]^n$. In some cases, solutions deviate slightly from this region, leading to poor numerical conditioning. This can be rectified by performing the uniform rescaling $\mathbf{x} \mapsto \alpha \mathbf{x}$ for some empirically determined constant $\alpha$; for computations in this work we set $\alpha = 2$ to ensure the proper scaling.

After completing the above pre-processing steps, the toolbox YALMIP \cite{Lofberg2004, Lofberg2009} (version 20190425) is used to formulate the optimization problem in the form of \eqref{eq: bound3} and pass the problem to the solver, MOSEK \cite{mosek} (version 9.0.98).  The upper bounds presented in this work were computed using a 3.0 GHz Xeon processor.

\section{Truncated model examples: Lorenz and HK8} \label{sec:examples}

The model construction process outlined in \S\ref{sec:Galerkin} can be used to construct a variety of truncated models of Rayleigh--B\'enard convection. In this section, the process of constructing such models is outlined for two particular examples: the Lorenz equations \cite{Lorenz63} and the HK8 model \cite{Gluhovsky2002}. 

Recall that the model construction requires selecting a number of modes for $\psi$ and $\theta$ that are collected in the sets $S_\psi$ and $S_\theta$, respectively. These sets are then used to build compatible triples of modes that will appear in the quadratic terms; these triples take the form 
\begin{align}
P_\psi[(m,n)] &= \{ ((p,q), (r,s)) \in S_\psi \times S_\psi: m = |p \pm r|, n = |q \pm s|, (p,q) > (r,s) \} \\
P_\theta[(m,n)] &= \{ ((p,q), (r,s)) \in S_\psi \times S_\theta: m = |p \pm r|, n = |q \pm s| \}, 
\end{align}
where $(p,q) > (r,s)$ refers to the lexicographical ordering. The general equations described in \S\ref{sec:constructModel} are:
\begin{align} 
    \dot{\psi}_{mn} &= -\sigma \rho_{mn} \psi_{mn} + (-1)^{m+n} (\sigma \R) \frac{mk}{\rho_{mn}} \theta_{mn} + Q_{mn}^{\psi}, \label{eq:psiTrunc2} \\
     \dot{\theta}_{mn} &= -\rho_{mn} \theta_{mn} + (-1)^{m+n} (mk) \psi_{mn} + Q_{mn}^{\theta}, \label{eq:thetaTrunc2}
 \end{align}
where $\rho_{\alpha}$ are the eigenvalues and $Q^psi$, $Q^\theta$ are given by
\begin{align}
    Q_{mn}^\psi &= \frac{k}{\rho_{mn}} \sum_{P_\psi[(m,n)]} \frac{\mu_1}{d} \big[B_{pmr} B_{snq} (p s) - B_{qns} (q r) \big] (\rho_{p q} - \rho_{r s}) \psi_{p q} \psi_{r s},  \label{eq:psiTruncQuad2} \\
    Q_{mn}^\theta &= k \sum_{P_\theta[(m,n)]} \frac{\mu_2}{d} \big[B_{pmr} B_{snq} (p s) - \mu_3 B_{qns} B_{rpm}(q r) \big] \psi_{pq} \theta_{rs}, \label{eq:thetaTruncQuad2}
\end{align}
Here $B,$ $\mu_1,$ $\mu_2,$ $\mu_3$ and $d$ are defined by
\begin{align}
    B_{ijk} &= \begin{cases}
        -1, & i = j + k, \\
        1, & \mbox{ else},
        \end{cases} \displaybreak[1] \label{eq:Bmatrix2}\\
    \mu_1 &= \begin{cases}
        B_{pmr}, & (m+n) \ \mbox{ even}, (r+s) \  \mbox{ odd}, \\
        -B_{pmr},  & (m+n) \ \mbox{ odd},\; (r+s) \  \mbox{ odd}, \\
        -1, & {\rm else},
        \end{cases} \displaybreak[1] \label{eq:mu1}\\
    \mu_2 &= \begin{cases}
    	\mu_3 \,B_{rpm}, & (m+n) \ \mbox{ even}, (r+s) \ \mbox{ odd}, \\
    	-B_{rpm} B_{pmr}, & (m+n) \ \mbox{ odd}, \;(r+s) \ \mbox{ even}, \\
    	B_{pmr}, & (m+n) \ \mbox{ even}, (r+s) \ \mbox{ even},\\
    	1 & \mbox{ else},
    	\end{cases}\displaybreak[1] \label{eq:mu2}\\
    \mu_3 &= \begin{cases}
        -1, & m=0, \\
        1, & \mbox{else},
        \end{cases} \displaybreak[1] \label{eq:mu3} \\
    d &= \begin{cases}
        2, & p = 0 \ \mbox{or} \  r = 0, \\
        4, & \mbox{else}.
        \end{cases} \label{eq:d}
\end{align}

\subsection{Lorenz equations}

The Lorenz equations can be constructed from the above equations by selecting the modes $\psi_{11}$, $\theta_{11}$, and $\theta_{02}$. The corresponding eigenvalues are $\rho_{11} = (k^2+1)$, and $\rho_{02} = 4$. The compatible triples for each mode are:
\begin{align}
P_\psi[(1,1)] &= \{\}, \\
P_\theta[(1,1)] &= \{\big( (1,1), (0,2) \big) \}, \\
P_\theta[(0,2)] &= \{\big( (1,1), (1,1) \big) \}.
\end{align} 
Therefore, the $\psi_{11}$ equation has no quadratic terms, and the other two equations have one quadratic term each. These quadratic terms are determined by computing the values of the constants in \eqref{eq:Bmatrix2}--\eqref{eq:d}. For the $Q_{11}^\theta$ term, these constants are:
\begin{alignat}{3}
B_{pmr} &= -1, \qquad \qquad & B_{snq} &= -1, \qquad \qquad & B_{qns} &= 1, \\
B_{rpm} &= 1, & \mu_2 &= -1, &\mu_3 &= 1, \\
&& d &= 2.
\end{alignat}
As a result,
\begin{equation}
Q_{11}^\theta = -\tfrac{1}{2} \left[ (-1)(-1)(2) - (1)(1)(1)(0) \right] = - \psi_{11} \, \theta_{02}.
\end{equation}
Following the same procedure, $Q_{02}^\theta = \tfrac{1}{2} \psi_{11} \, \theta_{11}$. Inserting these quadratic terms into \eqref{eq:psiTrunc2}--\eqref{eq:thetaTrunc2} yields the ODEs
\begin{align}
\dot{\psi}_{11} &= -\sigma (k^2+1) \psi_{11} + \sigma \R \frac{k}{k^2+1} \theta_{11}, \\
\dot{\theta}_{11} &= -(k^2+1) \theta_{11} + k \psi_{11} - k \psi_{11} \theta_{02}, \\
\dot{\theta}_{02} &= -4 \theta_{02} + \tfrac{k}{2} \psi_{11} \theta_{11}.
\end{align}
This system of equations can be transformed into the standard form of the Lorenz equations using the change of variables:
\begin{equation}
x = \frac{k}{\sqrt{2} \rho_{11}} \psi_{11}, \qquad y = \frac{k^2}{\sqrt{2} (\rho_{11})^3} \R \theta_{11}, \qquad z = \frac{k^2}{(\rho_{11})^3} \R \theta_{02}, \qquad \tau = (k^2 + 1) t.
\end{equation}
We then obtain
\begin{align}
\dot{x} &= \sigma (y - x), \\
\dot{y} &= -y + x (r - z), \\
\dot{z} &= -\beta z + xy,
\end{align}
where $r = \R k^2/(k^2+1)^3$ and $\beta = 4/(k^2+1)$. The Lorenz model is not included in the hierarchy given in \S\ref{sec:HK} since it does not have any shear modes. However, augmenting the Lorenz equations with the mode $\psi_{01}$ produces the first model in the hierarchy (HK4). The additional ODE is simply $\dot{\psi}_{01} = - \sigma \psi_{01}$, since the new mode does not form a compatible triple with any pair of modes in the Lorenz system. Thus, solutions to the HK4 model rapidly approach those of the Lorenz equations.

\subsection{HK8 Model}

The HK8 model is an extension of the Lorenz equations that was first considered in \cite{Gluhovsky2002} and was further analyzed in \cite{Goluskin2013, Souza2015, Olson2020}. The HK8 model equations can be reproduced by adding the modes $\psi_{01},\, \psi_{03}, \, \psi_{12},\, \theta_{04},$, and $\theta_{12}$ to the expansion that was used to construct the Lorenz equations. This results in more compatible triples:
\begin{equation}
\begin{aligned}
P_\psi[(0,1)] &= \{ ((1,1), (1,2)) \}, \\
P_\psi[(0,3)] &= \{ ((1,1), (1,2)) \}, \\
P_\psi[(1,1)] &= \{ ((0,1), (1,2)), ((0,3), (1,2)) \}, \\
P_\psi[(1,2)] &= \{ ((0,1), (1,1)), ((0,3), (1,1)) \}, \\
P_\theta[(0,2)] &= \{ ((1,1), (1,1)) \}, \\
P_\theta[(0,4)] &= \{ ((1,2), (1,2)) \}, \\
P_\theta[(1,1)] &= \{ ((1,1), (0,2)), ((0,1), (1,2)), ((0,3), (1,2)) \}, \\
P_\theta[(1,2)] &= \{ ((1,2), (0,4)), ((0,1), (1,1)), ((0,3), (1,1)) \}.
\end{aligned}
\end{equation}
Each member of these sets produces a quadratic term on the right-hand side of the corresponding ODE. The $\psi_{11}$ equation, for instance, now contains two quadratic terms---one proportional to $\psi_{01} \psi_{12}$ and the other proportional to $\psi_{03} \psi_{12}$. The coefficients of these terms are generated in the same manner as was demonstrated for the Lorenz equations. The result is the HK8 model: 
\begin{equation}
\label{eq:HK8}
\begin{aligned}
\dot{\psi}_{11} &= -\sigma (k^2+1) \psi_{11} + \sigma \R \tfrac{k}{k^2+1} \theta_{11} + \tfrac{k}{2} \tfrac{k^2+3}{k^2+1} \psi_{01} \psi_{12} - \tfrac{3k}{2} \tfrac{k^2-5}{k^2+1} \psi_{12} \psi_{03}, \\
\dot{\psi}_{01}  &= -\sigma \, \psi_{01} - \tfrac{3k}{4} \psi_{11} \psi_{12},\\
\dot{\psi}_{12} &= -\sigma (k^2+4) \psi_{12} - \sigma \R \tfrac{k}{k^2+4} \theta_{12} - \tfrac{1}{2} \tfrac{k^3}{k^2+4} \psi_{11} \psi_{01} + \tfrac{3k}{2} \tfrac{k^2-8}{k^2+4} \psi_{11} \psi_{03}, \\
\dot{\theta}_{11} &= -(k^2+1) \theta_{11} + k \psi_{11} - k \psi_{11} \theta _{02}  { - \tfrac{k}{2} \psi_{01} \theta_{12} }  {\ +  \tfrac{3k}{2} \theta_{12} \psi_{03}}, \\
\dot{\theta}_{02} &= -4 \, \theta_{02} + \tfrac{k}{2} \psi_{11} \theta_{11}, \\
\dot{\theta}_{12}  &= -(k^2+4) \theta_{12} - k \psi_{12} + \tfrac{k}{2} \psi_{01} \theta_{11} - \tfrac{3k}{2} \psi_{03} \theta_{11} + 2k \psi_{12} \theta_{04}, \\
\dot{\psi}_{03} &= -9\, \sigma \,  \psi_{03} + \tfrac{k}{4} \psi_{11} \psi_{12},\\
\dot{\theta}_{04} &= -16\,  \theta_{04} - k \psi_{12} \theta_{12}.
\end{aligned}
\end{equation}

\section{Proofs} \label{sec:proofs}

In the HK hierarchy, the zero state undergoes one or more pitchfork bifurcations due to linear instabilities, resulting in the \emph{primary equilibria} of the given HK model. We observed in \S\ref{sec:HKEquil} that shear modes, i.e. modes of the form $\psi_{0,n_0}$, are identically zero along primary equilibrium branches. We prove this below.

\begin{proposition}
Modes of the form $\psi_{0,n_0}$ are identically zero along all primary equilibria of any model in the HK hierarchy.
\end{proposition}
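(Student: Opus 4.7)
The plan is to identify an invariant subspace $E$ of phase space that contains the primary branch and excludes every shear mode, then use bifurcation theory to confine the primary branch to $E$. Let $E$ consist of all states with $\psi_{mn} = \theta_{mn} = 0$ whenever $m+n$ is odd. Under the phase convention of Section~\ref{sec:simplifyModel}, each shear mode $\psi_{0,n_0}$ has $n_0$ odd, hence $m + n$ odd, so every shear mode vanishes identically on $E$. It therefore suffices to establish that the primary equilibria lie in $E$.

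First, I would prove that $E$ is invariant under the dynamics \eqref{eq:psiTrunc}--\eqref{eq:thetaTrunc}. The linear terms couple $\psi_{mn}$ only with $\theta_{mn}$ at the same $(m,n)$, so they preserve the parity of $m+n$. For the quadratic terms, any compatible triple $(\alpha,\beta,\gamma) = ((m,n),(p,q),(r,s))$ satisfies $m = |p \pm r|$ and $n = |q \pm s|$, which gives $m \equiv p+r$ and $n \equiv q+s \pmod{2}$, and therefore $m+n \equiv (p+q) + (r+s) \pmod{2}$. If $m + n$ is odd, then exactly one of $(p,q)$ and $(r,s)$ has even total wavenumber and the other has odd total wavenumber. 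In $E$ the factor with odd total wavenumber is zero, so $Q_{mn}^\psi$ and $Q_{mn}^\theta$ vanish for every odd-$(m+n)$ pair $(m,n)$. Combined with the vanishing linear terms, this shows $\dot\psi_{mn} = \dot\theta_{mn} = 0$ on $E$ for every such $(m,n)$, so $E$ is invariant.

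Next, I would locate the primary bifurcation within $E$. The linearization of the system at the zero equilibrium decouples into $2\times 2$ blocks indexed by each pair $(m,n) \in S_\psi \cap S_\theta$, and by \eqref{eq:RLmn} the $(1,1)$ block is the first to destabilize as $\R$ increases, with critical eigenspace spanned by $\psi_{11}$ and $\theta_{11}$. Since $1+1$ is even, this eigenspace lies in $E$. By the Crandall--Rabinowitz theorem, a unique $C^1$ branch of nontrivial equilibria emerges from the zero state at $\R = \R_{11}$, tangent to this critical eigenspace. Since $E$ is invariant, the equilibrium equations restrict well to $E$, and the primary branch arises equivalently as the pitchfork branch of the restricted problem, so it lies in $E$ near $\R_{11}$. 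The implicit function theorem, applied in arc-length form through any fold points, extends the branch globally within $E$; any attempt of the branch to leak out of $E$ would contradict the local uniqueness of continuation inside the invariant subspace. Since shear modes vanish on $E$, this proves the proposition.

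The main obstacle is the last step: ensuring the continued primary branch cannot depart from $E$ at a singular point. This is handled by observing that secondary bifurcations from the primary branch introduce \emph{new} branches in directions transverse to $E$, but do not alter the primary branch itself, which remains the unique continuation in $E$ determined by the restricted equilibrium problem.
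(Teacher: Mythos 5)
Your proposal is correct and is essentially the paper's own argument in more formal dress: the paper confines the branch to modes whose wavenumber pairs lie in the integer span of $(m,n)$ and $(m,-n)$, which for the primary branch $(m,n)=(1,1)$ is exactly your even-total-wavenumber subspace $E$, and it rules out shear modes by the same parity observation that $n_0$ is odd. The only differences are cosmetic: you add Crandall--Rabinowitz and arc-length continuation scaffolding to keep the branch inside the invariant subspace (a step the paper's nonlinear-pairing-cascade argument leaves implicit), whereas the paper's lattice formulation applies uniformly to every branch bifurcating from the zero state rather than only the primary one.
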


\begin{proof}
A linear perturbation analysis of the governing ODEs for models in the HK hierarchy \eqref{eq:psiTrunc}--\eqref{eq:thetaTrunc} reveals that bifurcations from the zero state occur in the $\psi_{mn}$--$\theta_{mn}$ subspace, for some modal pair $(m,n)$. Here we let $(m,n)$ be fixed and analyze the corresponding equilibrium branch. 

Additional variables can become nonzero along the $(m,n)$ branch when one or more nonzero terms appear on the right-hand side of the corresponding ODE. This occurs due to nonlinear pairing in the terms $Q_{m'n'}^\psi$ \eqref{eq:psiTruncQuad} or $Q_{m'n'}^\theta$ \eqref{eq:thetaTruncQuad}. All of the terms in $Q_{m'n'}^\psi$ are proportional to $\psi_{pq} \psi_{rs}$, where $m' = |p \pm r|$ and $n' = |q \pm s|$; likewise, the terms in $Q_{m'n'}^\theta$ are proportional to $\psi_{pq} \theta_{rs}$. For example, the $\theta_{0,2n}$ term will be nonzero since a term proportional to $\psi_{mn} \theta_{mn}$ appears in $Q_{0,2n}^\theta$, where under the notation above $m' = 0$, $n' = 2$, $p = r = m$, and $q = s = n$.

To prove the proposition, it suffices to show that this nonlinear pairing mechanism can never activate a mode of the form $\psi_{0, n_0}$. Here it is important to note that due to the horizontal phase condition introduced in \S\ref{sec:simplifyModel}, $n_0$ must be an odd integer in all shear modes. The key observation in the proof is that the terms that are activated in the nonlinear pairing mechanism are a subset of a vector space with integer scalars, spanned by $(m,n)$ and $(m,-n)$. This is because modes pair by the addition or subtraction of their wavenumbers, so all modes that are activated must be multiples of these building blocks. To see if the shear modes can be nonzero along the $(m,n)$ primary branch, we simply check if they lie in this vector space:
\begin{equation}
(0,n_0) = a (m,n) + b (m, -n).
\end{equation}
When $n_0$ is an odd integer, and $m, n$ are integers, one can easily check that the above system has no integer solutions. Therefore, shear modes are never activated by nonlinear pairing in the primary equilibria.
\end{proof}

\bibliography{HK}
\bibliographystyle{abbrv}

\end{document}